\def\BibTeX{{\rm B\kern-.05em{\sc i\kern-.025em b}\kern-.08em
    T\kern-.1667em\lower.7ex\hbox{E}\kern-.125emX}}
\newcolumntype{P}[1]{>{\centering\arraybackslash}p{#1}}
\renewcommand{\vec}[1]{\bm{#1}}
\newcommand{\true}{\top}
\newcommand{\false}{\bot}
\newtheorem{problem}{\textbf{Problem}}
\newtheorem{proposition}{\bf{Proposition}}
\newtheorem{definition}{\bf{Definition}}
\newtheorem{lemma}{\bf{Lemma}}
\newtheorem{theorem}{\bf{Theorem}}
\newtheorem{remark}{\bf{Remark}}
\newtheorem{assumption}{\bf{Assumption}}
\newtheorem{corollary}{\bf{Corollary}}
\def\myunderbar#1{\underline{\sbox\tw@{$#1$}\dp\tw@\z@\box\tw@}}
\begin{document}

\title{Decentralized Control of Multi-Agent Systems Under Acyclic \\ Spatio-Temporal Task Dependencies}

\author{Gregorio Marchesini, Siyuan Liu, 
Lars Lindemann and Dimos V. Dimarogonas
	\thanks{This work was supported in part by the Horizon Europe EIC project SymAware (101070802), 
the ERC LEAFHOUND Project, the Swedish Research Council (VR), Digital Futures, and the Knut and Alice Wallenberg (KAW) Foundation.}
\thanks{Gregorio Marchesini, Siyuan Liu, and Dimos V. Dimarogonas are with the Division
of Decision and Control Systems, KTH Royal Institute of Technology, Stockholm, Sweden.
	E-mail: {\tt\small \{gremar,siyliu,dimos\}@kth.se}.  Lars Lindemann is with the Thomas Lord Department of Computer Science, University of Southern California, Los Angeles, CA, USA.
	E-mail: {\tt\small \{llindema\}@usc.ed}.   
}
}

\maketitle
\begin{abstract}
We introduce a novel distributed sampled-data control method tailored for heterogeneous multi-agent systems under a global spatio-temporal task with acyclic dependencies. Specifically, we consider the global task as a conjunction of independent and collaborative tasks, defined over the absolute and relative states of agent pairs. Task dependencies in this form are then represented by a task graph, which we assume to be acyclic. From the given task graph, we provide an algorithmic approach to define a distributed sampled-data controller prioritizing the fulfilment of collaborative tasks as the primary objective, while fulfilling independent tasks unless they conflict with collaborative ones. Moreover, communication maintenance among collaborating agents is seamlessly enforced within the proposed control framework. A numerical simulation is provided to showcase the potential of our control framework.
\end{abstract}


\section{Introduction}
Over the past decade, Control Barrier Functions (CBFs) have emerged as versatile control solutions in the realm of cyber-physical systems. Indeed, CBF-like constraints have enabled the seamless integration of multiple control objectives into optimal control frameworks, such as model predictive control and quadratic programming-based controllers, as well as analytical feedback control laws \cite{lindemann2018control,zehfroosh2022non,chen2020guaranteed,singletary2021comparative,distributedFormationFuJunie}.\par
This work is concerned with the development of a novel distributed sampled-data controller applicable to heterogeneous multi-agent systems subject to communication and spatio-temporal constraints expressed as Signal Temporal Logic (STL) specifications. For multi-agent systems, \cite{panagou2015distributed,larsTCNS,nonsmoothBarriers,distributedSecondOrder,wang2016safety} have developed distributed/decentralized control laws meeting several objectives, such as formation control and stabilization, as well as connectivity and collision avoidance constraints. Nevertheless, the interplay between connectivity, safety, and spatio-temporal task fulfilment 
requires further investigation. In \cite{larsTCNS,nonsmoothBarriers}, the proposed CBF-enabled distributed control law requires global knowledge of the entire system state, although the control inputs are computed locally. Hence, the framework is only applicable to systems with all-to-all communication. In \cite{panagou2015distributed}, a novel decentralized control law merging stabilization and safety objectives into Lyapunov-like barrier functions for controlling a swarm of differential drive robots was proposed. In their approach, agents are assumed to be connected to a central broadcasting agent from which goal destinations are dispatched, while collisions are resolved by a local avoidance scheme. The works in \cite{distributedSecondOrder, wang2016safety} offer a decentralized formation control approach with input limitations and collision/connectivity constraints satisfied through suitable CBFs; however, more general objectives, like spatio-temporal tasks, are not considered therein. At the same time, none of the previous approaches is designed to safely handle control inputs expressed in a zero-order hold fashion, as typically applied from real embedded controllers.\par
Given these previous results, we propose a quadratic programming-based decentralized control framework that incorporates STL specification (tasks), as well as communication constraints in the form of CBF-like constraints. Building upon  \cite{roque2022corridor,breeden2021control,shawPaper}, we provide a definition of sampled-data control barrier functions, which are applied to encode the control objectives and provide continuous-time guarantees over their satisfaction. Furthermore, following our previous work \cite{marchesini2024communication}, we assume a global STL task assigned to the MAS can be decomposed as a conjunction of independent and collaborative tasks defined over the absolute and relative state of couples of agents in the system, respectively, from which an acyclic task graph is derived. The established acyclic task graph is then applied to define a leader-follower structure over each edge of the graph, such that leaders are responsible for the satisfaction of collaborative tasks over their assigned edge, while followers reduce their maximum control authority to favour the satisfaction of the former task. In summary, the main contribution of this work is the development of a novel sampled-data decentralized control framework that guarantees continuous-time satisfaction of spatio-temporal tasks with acyclic dependencies expressed as STL specifications, as well as communication constraints. The presentation is organised as follows: Sec \ref{Preliminaries} summarizes preliminaries and problem formulation. In Sec. \ref{sample data section} we define sampled-data CBFs and how STL tasks can be encoded in this framework. In Sec. \ref{control}, the novel decentralized control scheme is presented together with relevant algorithms for its implementation, while Secs. \ref{simulations}-\ref{conclusions} provide numerical simulations and conclusions.\par
\textit{Notation}: Bold letters indicate vectors while capital letters indicate matrices. Vectors are considered to be column vectors. The notation $|\mathcal{A}|$ indicates the cardinality of the set $\mathcal{A}$. For a given scalar $\gamma \in \mathbb{R}$, let $\gamma \mathcal{A}:=\{\gamma a | a \in \mathcal{A}\}$. Let $\oplus$ and $\ominus$ indicate the Minkowski sum and difference of two sets. Let $blk(A_1,\ldots A_n)$ represent the block diagonal matrix with blocks given by the matrices $A_1,\ldots A_n$. We use the notation $\partial_{x} := \frac{\partial}{\partial x}$ to identify the partial derivative w.r.t to $x$. The set $\mathbb{R}_{+}$ denotes the non-negative real numbers while the set $\mathbb{N}_0:=\mathbb{N} \cup \{0\}$ is the set of natural numbers including zero.

\section{Preliminaries and problem formulation}\label{Preliminaries}
Let $\mathcal{V}=\{1,\ldots N\}$ be the set of indices assigned to each agent in a multi-agent system and let each agent be governed by an input-affine dynamics of the form:
\begin{equation}\label{eq:single agent dynamics}
\dot{\vec{x}}_i    = f_i(\vec{x}_i) + g_i(\vec{x}_i)\vec{u}_i, 
\end{equation}
with state $\vec{x}_i\in \mathbb{R}^n \subset \mathbb{X}_i$ and input vector $\vec{u}_i\in \mathbb{R}^{m_i} \subset \mathbb{U}_i$, where $m_i$ is the input dimension for agent $i$. Let $\mathbb{X}_i$ be compact and $\mathbb{U}_i$ be compact and convex. Moreover, let the functions $f_i : \mathbb{R}^n \rightarrow  \mathbb{R}^n$, $g_i : \mathbb{R}^n \rightarrow \mathbb{R}^{n\times m_i}$ be locally Lipschitz continuous over $\mathbb{X}_i$. The global MAS dynamics is then compactly written as
\begin{equation}\label{eq:multi agent dynamics}
    \dot{\vec{x}} = \bar{f}(\vec{x}) + \bar{g}(\vec{x})\vec{u}
\end{equation}
where $\vec{x}:= [\vec{x}^T_1,\ldots \vec{x}^T_{N}]^T, \vec{u}:= [\vec{u}^T_1,\ldots \vec{u}^T_{N}]^T, \bar{f}(\vec{x}):=[f_1^T, \ldots f_{N}^T]^T$ and $\bar{g}(\vec{x}):=blk(g_1,\ldots g_N)$.
Let $\vec{e}_{ij}:=\vec{x}_i-\vec{x}_{j} \in \mathbb{X}_{ij}$ represent the relative state vector for each $i,j\in\mathcal{V}$ where $\mathbb{X}_{ij}:= \mathbb{X}_i \ominus \mathbb{X}_j$. Also let $\vec{p}_i = S\vec{x}_i$ represent the position of each agent $i$ for a given selection matrix\footnote{A selection matrix applies to select $m\leq n$ unique elements from a vector of dimensions $n$ such that $S\in \mathbb{R}^{m \times n}$ with $S[i,j] \in \{0,1\}$;  $\sum_{j=1}^{n}S[i,j] =1, \, \forall i=1,\ldots m$ and $\sum_{i=1}^{m}S[i,j] =1, \, \forall j=1,\ldots n$.} $S$ and let the relative position vector $\vec{p}_{ij}:=\vec{p}_i-\vec{p}_j$ and the communication radius $r_c>0$ such that it is assumed that at time $t$ agents $i$ and $j$ can collaborate toward the satisfaction of a collaborative task (defined in Sec. \ref{stl section}) if $\|\vec{p}_{ij} \|\leq  r_c$.
\begin{figure}
    \centering
    \includegraphics[width=0.5\linewidth]{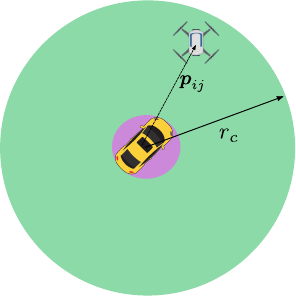}
    \caption{Two heterogeneous agents (a drone and a ground vehicle with differential drive dynamic) within their respective communication radius}
    \label{fig:enter-label}
\end{figure}

\subsection{Signal Temporal Logic and Task Graph}\label{stl section}
Signal Temporal Logic is a predicate logic defined over a set of boolean-valued predicates $\mu(h(\vec{x})):=
\bigl\{\begin{smallmatrix*}
\true &\text{if} \; h(\vec{x})\geq 0 \\
\false &\text{if} \; h(\vec{x})<0 ,
\end{smallmatrix*}$ where $h: \mathbb{R}^{n\cdot N}\rightarrow \mathbb{R}$ is a real-valued \textit{predicate function}. Intuitively, $h(\vec{x})$ encodes some spatial constraints for the system. In the following, we consider predicate functions of the form $h_{i}:= h_i(\vec{x}_i)$ or $h_{ij}:= h_{ij}(\vec{e}_{ij})$ for  $i,j\in \mathcal{V}$. We refer to the former as \textit{independent} predicate function and the latter as \textit{collaborative} predicate function.
\begin{assumption}\label{concavity}
    Individual/Collaborative predicate functions $h_{i}$ and $h_{ij} \; \forall i,j\in \mathcal{V}$ are  concave functions.
\end{assumption}\par
Typical examples of functions that fit into our framework are 1) communication maintenance $h_{ij}(\vec{e}_{ij})= r_c^2-\|\vec{p}_{ij}\|^2$ 2) relative position formation $h_{ij}(\vec{e}_{ij})= r^2-(\vec{p}_{ij}-\vec{c}_{ij})^TP(\vec{p}_{ij}-\vec{c}_{ij})$ with formation vector $\vec{c}_{ij}\in \mathbb{R}^3$, $r>0$ and positive definite matrix $P\in \mathbb{R}^{3\times 3}$,  3) polyhedral formation $h_{ij}(\vec{e}_{ij})= -log \bigl(\sum_{i=1}^p exp(\vec{a}_i^T\vec{e}_{ij}-b_i) \bigr)$ where $\vec{a}_i \in \mathbb{R}^{n}$ and $b_i\in \mathbb{R}$ define the $p$ hyperplanes $\{\vec{x}\in \mathbb{R}^n| \vec{a}_i^T \vec{x} -b_i \geq 0 \}$ for some $p\geq1$, 4) Go-to-goal specifications like $h(\vec{x}_{i})= \epsilon^2-\|\vec{p}_i-\vec{t}_{i}\|^2$ for some $\epsilon\geq0$ with goal position $\vec{t}_i\in \mathbb{R}^3$.
The applied STL fragment is recursively given as
\begin{subequations}\label{eq:working fragment}
\begin{align}
\varphi_{i}  &:= F_{[a,b]}\mu_i|G_{[a,b]}\mu_i\label{eq:single agent spec}\\
\varphi_{ij} &:=  F_{[a,b]}\mu_{ij}|G_{[a,b]}\mu_{ij} \label{eq:multi agent spec}\\
\phi_{ij} &:= \wedge_{k} \varphi^k_{ij},\; \phi_{i}  := \wedge_{k} \varphi^k_{i} ,
\end{align}
\end{subequations}
where $\mu_{ij}:= \mu(h_{ij}(\vec{e}_{ij}))$ and $\mu_{i}:= \mu(h_{i}(\vec{x}_{i}))$. The operators $G$ and $F$ are the \textit{always} and \textit{eventually} operators. For a given STL task $\phi$,  the notation $\vec{x}(t)\models \phi$ indicates that the state trajectory $\vec{x}(t)$ satisfies $\phi$. All the conditions under which $\vec{x}(t)$ satisfies $\phi$ (the STL semantics) are given in \cite{maler2004monitoring,fainekos2009robustness} and are not revised here due to space limitations.  Although, recall that $\vec{x}(t)\models G_{[a,b]}\mu \Leftrightarrow \mu(h(\vec{x}(t))):= \top,\, \forall t\in [a,b] $ and  $\vec{x}(t)\models F_{[a,b]}\mu \Leftrightarrow \exists \tau \in [a,b]\, :\;  \mu(h(\vec{x}(\tau))):= \top$. Consistently with previous nomenclature, we refer to tasks of type  \eqref{eq:single agent spec} and \eqref{eq:multi agent spec} as individual and collaborative tasks respectively.\par 
Collaborative and independent tasks from fragment \eqref{eq:working fragment} naturally induce the definition of an undirected \textit{task graph} $\mathcal{G}_{\psi}(\mathcal{V},\mathcal{E}_{\psi})$ where $\mathcal{E}_{\psi}\subset \mathcal{V}\times \mathcal{V}$ with $(i,j)\in \mathcal{E}_{\psi}$ if there exists a collaborative task $\phi_{ij}$ as per \eqref{eq:multi agent spec} among $i$ and $j$. Due to the limited communication radius $r_c$, we also introduce the time-varying 
 undirected communication graph $\mathcal{G}_c(\mathcal{V},\mathcal{E}_c(t))$ where $(i,j)\in \mathcal{E}_c(t)$ if $\|\vec{p}_{ij}(t)\| \leq r_c$ and let $\mathcal{N}_{\psi}(i)=\{j\in \mathcal{V}| (i,j)\in \mathcal{E}_{\psi}\}$, $\mathcal{N}_{c}(i)=\{j\in \mathcal{V}| (i,j)\in \mathcal{E}_{c}(t)\}$ be the task and communication neighbours for agent $i$. The global task $\psi$ assigned to the MAS is then compactly written as
\begin{equation}\label{eq:global task}
\begin{gathered}
\psi= \psi_{ind} \land \psi_{col}
\end{gathered}
\end{equation} 
\vspace{-0.5cm}\\
where $\psi_{ind}:=\wedge_{i\in \mathcal{V}}  \phi_{i},  \quad \psi_{col} := \wedge_{i\in \mathcal{V}}( \wedge_{j\in \mathcal{N}_{\psi}(i)} \phi_{ij} )$.
\begin{assumption}\label{task symmetry} The task graph $\mathcal{G}_{\psi}$ is acyclic.
\end{assumption}
While Assmp. \ref{task symmetry} can be considered restrictive as it imposes the task dependencies to follow a tree structure, the authors recently proposed a formula rewrite decomposition for tasks in fragment \eqref{eq:working fragment} such that cyclic task graphs can be reformulated into acyclic ones under mild assumption \cite{marchesini2024communication}.
\subsection{Sampled-Data Control}
 Consider a sampling interval $\delta t>0$, initial time $t^0$, sampling instants $t^k:=t^0 + k\delta t$ for $k\in \mathbb{N}_0$ and let the shorthand notation $a^k:=a(t^k)$ for any general quantity $a$ (scalar or vectorial). We consider each agent to be subject to a piece-wise constant (p.w.c) input $\vec{u}_i : \mathbb{R}_+ \rightarrow \mathbb{U}$ such that
\begin{equation}\label{eq:piece-wise constant}
\vec{u}_i(t) = \vec{u}_{i}^k \in \mathbb{U}_i, \quad \forall t\in [t^k,t^{k+1}) \,, \forall i\in \mathcal{V}.
\end{equation}
Furthermore, let $\mathcal{U}_i^{\delta t}$ be the set of p.w.c input $\vec{u}_i(t)$ satisfying \eqref{eq:piece-wise constant} for agent $i$ and  $\mathcal{X}_i$ as the set of state trajectories $\vec{x}_i : \mathbb{R}^+ \rightarrow \mathbb{X}_i$ satisfying \eqref{eq:single agent dynamics} under $\vec{u}_i(t)\in \mathcal{U}^{\delta t}_i$.
\begin{definition}\label{eq:reachable set}
    Consider \eqref{eq:multi agent dynamics} and sampling interval $\delta t>0$. The reachable set from state $\vec{x}^k$ is defined as $\mathcal{R}(\vec{x}^k,\delta t):=\{\vec{x}\in \mathbb{X} | \vec{x}= \int_{t^k}^t\bar{f}(\vec{x}(t)) + \bar{g}(\vec{x}(t))\vec{u}^k\, dt,\;\vec{x}(t^k)=\vec{x}^k,\;  \forall \vec{u}^k\in \mathbb{U},\,  \forall t\in [t^k,t^k + \delta t) \}$.
\end{definition}
\par We hereafter consider that it is possible to compute the reachable set $\mathcal{R}(\vec{x}^k,\delta t)$ for giving initial state $\vec{x}^k$. In the practical application of our proposed control approach, an upper bound of the actual reachable set is sufficient to maintain the soundness of our proposed approach.
\subsection{Problem Formulation}
Given these preliminaries, the problem  approached in this work is formalised as follows:
\begin{problem}\label{main problem}
Given a time-varying communication graph $\mathcal{G}_c(t)$, and acyclic task graph $\mathcal{G}_{\psi}$ associated with the global task $\psi:=\psi_{col}\land \psi_{ind}$ as per \eqref{eq:global task}, define a decentralized p.w.c control law $\vec{u}_i(t)\in \mathcal{U}_i^{\delta t}$ as per \eqref{eq:piece-wise constant} such that 1) $\vec{x}(t)\models \psi$ and 2) $\mathcal{G}_\psi \subseteq \mathcal{G}_c(t),\; \forall t\in \mathbb{R}_+$ if $\mathcal{G}_\psi \subseteq \mathcal{G}_c(t^0)$.
\end{problem}\par 
By the term decentralized we intend that the control input $\vec{u}_i(t)$ at time $t$ for agent $i$ is computed based on the state of the neighbouring agents $j\in \mathcal{N}_{c}(i)$ rather than the global state $\vec{x}$ of the system. Hence, the communication maintenance objective 2) in Problem \ref{main problem} is a precondition for the satisfaction of objective 1). In Theorem \ref{main control theorem} our proposed feedback control law satisfying objective 1) under the assumption that $\mathcal{G}_{\psi} \subseteq \mathcal{G}_{c}(t)$ for all $t\in \mathbb{R}_+$ is provided. On the other hand, Corollary \ref{communicationc corollary} clarifies how the same control solution handles the enforcement of the communication objective $\mathcal{G}_\psi \subseteq \mathcal{G}_c(t),\; \forall t\in \mathbb{R}_+$, by including communication tasks in conjunction with $\psi_{col}$.

\section{Sampled-data time-varying CBFs}\label{sample data section}
Sampled-data Control Barrier Functions (sdCBF) are considered as suitable framework to encode STL tasks for our distributed control approach.
Let the scalar-valued function $b : \mathcal{D}\times \mathbb{R}_{+}\rightarrow \mathbb{R}$ with $\mathcal{D}\subset\mathbb{R}^{n\cdot N}$ being open and with corresponding level set $\mathcal{C}(t)$ defined as
\begin{equation}\label{eq:level set}
\mathcal{C}(t)=\{\vec{x}\in \mathcal{D} |\;  b(\vec{x},t)\geq0 \}.
\end{equation}
The following definitions are given
\begin{definition}\label{dLipschitz}
A scalar function $b: \mathcal{D} \times \mathbb{R}_+ \rightarrow \mathbb{R}$ is \textit{sd-differentiable} if it is differentiable over $\mathcal{D}$, piece-wise differentiable over $\mathbb{R}_+$ with discontinuities over a discrete set $S =\{\tau_0,\ldots \tau_q\}$ for some $q\geq 0$ and such that $\partial_{\vec{x}}b(\vec{x},t)$, $\partial_{t}b(\vec{x},t)$ are Lipschitz continuous over $\mathcal{D}\times [\tau_i,{\tau_{i+1}}), \forall i=0,\ldots q-1.$ Moreover, 1) the left limit $\lim_{t^- \rightarrow \tau_i} \mathcal{C}(t)\subseteq \mathcal{C}(\tau_i) \; \forall \tau_i \in S$, 2) $\tau_i=t^k$ for some $k\in \mathbb{N}_0$.
\end{definition}\par
In Def. \ref{dLipschitz}, Lipschitz continuity of the gradients $\partial_{\vec{x}}b(\vec{x},t)$ implies Lipschitz continuity of the Lie derivative $L_{\bar{f}}b(\vec{x},t)$ and $L_{\bar{g}}b(\vec{x},t)$, which is relevant for the proof of Lemma \ref{lemma local sample margin}. At the same time condition 1) intuitively imposes that $\mathcal{C}(t)$ is at least locally expanding at points of time discontinuity in $S$. Eventually condition 2) enforces times of discontinuities to correspond to sampling instants. Note that this assumption is not restrictive, but introduced nevertheless to avoid technical non-smooth analysis arguments in the proof of Lemma \ref{lemma local sample margin}.
\begin{definition}\label{forward invariance}
The set $\mathcal{C}(t)$ is \textit{forward invariant} over the time interval $[t^0,\tau]$ under $\vec{u}(t)\in \prod_i \mathcal{U}_i^{\delta t}$ if there exists a unique solution $\vec{x}(t)\in \prod_i\mathcal{X}_i$ to \eqref{eq:multi agent dynamics}, such that $\vec{x}(t^0)\in \mathcal{C}(t^0) \Rightarrow \vec{x}(t)\in \mathcal{C}(t) ,\; \forall t\in [t^0,\tau]$.
\end{definition}\par
Next, provide a definition of sdCBFs that are piece-wise differentiable over time. Note that continuous differentiability was assumed in \cite{roque2022corridor,breeden2021control}.
\begin{definition}\label{sdcbf definition} Consider a sampling interval $\delta t>0$,  system \eqref{eq:multi agent dynamics} and the sd-differentiable function $b: \mathcal{D} \times \mathbb{R}_+ \rightarrow \mathbb{R}$ where $\mathcal{C}(t)$ in \eqref{eq:level set} is compact and  such that $\mathcal{C}(t)\subset \mathcal{D}, \; \forall t \geq0$, over the open set $\mathcal{D}$. Then $b(\vec{x},t)$ is a sampled-data Control Barrier Function (sdCBF) if $\bar{g}(\vec{x})\partial_{\vec{x}} b(\vec{x},t) =\vec{0} \Leftrightarrow \partial_{\vec{x}} b(\vec{x},t) = \vec{0}$ and there exists a \textit{margin function} $\zeta : \mathbb{R}^n\times \mathbb{R}_+ \rightarrow \mathbb{R}$ defined as 
\begin{equation}\label{eq: local zeta}
\zeta(\vec{x},t) = \lambda b(\vec{x},t) + \partial_t b(\vec{x}, t) + \nu(\vec{x},t),
\end{equation}
where $\lambda \in \mathbb{R}_+$ and  $\nu(\vec{x},t)\leq \bar{\nu}(\vec{x},t)\; \forall (\vec{x},t) \in \mathcal{D}\times \mathbb{R}_+ $ with\footnote{For a scalar function $b(\vec{x},t):\mathbb{R}^{n\cdot N} \rightarrow \mathbb{R}$ the Lie derivative of $b$ along $\bar{f}$, $\bar{g}$, $f_i$ and $g_i$ is given as $L_{\bar{f}}b(\vec{x},t):= \partial_{\vec{x}}b(\vec{x},t)\bar{f}(\vec{x})$, $L_{\bar{g}}b(\vec{x},t):= \partial_{\vec{x}}b(\vec{x},t)\bar{g}(\vec{x})$, $L_{f_i}b(\vec{x},t):= \partial_{\vec{x}_i}b(\vec{x},t)f_i(\vec{x}_i)$, $L_{g_i}b(\vec{x},t):= \partial_{\vec{x}_i}b(\vec{x})g_i(\vec{x})$.}
\begin{equation}\label{eq:local nu}
\begin{gathered}
\bar{\nu}(\vec{x},t) = \min_{\bar{\vec{x}}\in \mathcal{R}(\vec{x},\delta t),\tau\in [t^k, t^{k+1}), \vec{u}\in \mathbb{U}}  \\
L_{\bar{f}} b(\vec{x}, t) + L_{\bar{g}}b(\vec{x},t)\vec{u}  - (L_{\bar{f}} b(\bar{\vec{x}}, \tau) + L_{\bar{g}}b(\bar{\vec{x}},\tau)\vec{u}) +  \\
\partial_t b(\vec{x}, t) -  \partial_t b(\bar{\vec{x}}, \tau) +
\lambda b(\vec{x}, t) -  \lambda b(\bar{\vec{x}}, \tau));
\end{gathered}
\end{equation}
such that there exist a p.w.c input $\vec{u}(t)\in \prod_i\mathcal{U}^{\delta t}_i$ as per \eqref{eq:piece-wise constant} satisfying the condition
\begin{equation}\label{eq: sdcbf condition}
\begin{aligned}
L_{\bar{f}} b(\vec{x}^k, t^k) + L_{\bar{g}}b(\vec{x}^k,t^k)\vec{u}^k \geq  - \zeta(\vec{x}^k,t^k)
\end{aligned}
\end{equation}
for all $\vec{x}^k\in \mathcal{D}$ and $t^k$ with $k\in \mathbb{N}_{0}$.
\end{definition}\par
The term $\bar{\nu}(\vec{x},t)$ intuitively represents the minimum local negative variation of the classical CBF constraint $L_{\bar{f}} b(\vec{x}, t) + L_{\bar{g}}b(\vec{x},t)\vec{u} + \partial_t b(\vec{x}, t) + \lambda b(\vec{x}, t) \geq 0$ at $(\vec{x},t)$, which defines the local effect of sampling over the constraint satisfaction. Note that $\bar{\nu}(\vec{x},t)$ exists and is bounded since $L_{\bar{f}}b(\vec{x},t),L_{\bar{g}}b(\vec{x},t), \lambda b(\vec{x},t),\partial_{t}b(\vec{x},t)$ are Lipschitz over the intervals $\mathcal{D}\times [t^k,t^{k+1})$. The following lemma inspired by \cite[Thm.2]{breeden2021control} justifies the definition of an sdCBF.
\begin{lemma}\label{lemma local sample margin}
Assume $b(\vec{x},t)$ is an sdCBF as per Def. \ref{sdcbf definition} for a sampling interval $\delta t>0$ and let $\vec{x}^0 = \vec{x}(t^0)\in \mathcal{C}(t^0)$. If \eqref{eq:multi agent dynamics} is subject to a p.w.c input trajectory $\vec{u}(t)\in \prod_i\mathcal{U}_i^{\delta t}$ such that condition \eqref{eq: sdcbf condition} is satisfied for all $\vec{x}^k\in \mathcal{D}$ and $t^k$ with $k\in \mathbb{N}_{0}$, then $\mathcal{C}(t)$ is forward invariant over $[t^0, \infty]$. 
\end{lemma}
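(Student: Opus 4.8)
The plan is to reduce the sampled-data claim to a chain of continuous-time forward-invariance arguments, one per sampling interval, stitched together by the discontinuity conditions of Def.~\ref{dLipschitz}. I would fix an arbitrary interval $[t^k,t^{k+1})$ and show that the single inequality \eqref{eq: sdcbf condition}, enforced only at the left endpoint $(\vec{x}^k,t^k)$ under the constant input $\vec{u}^k$, actually forces the full continuous-time constraint
\begin{equation*}
L_{\bar f}b(\vec{x}(t),t)+L_{\bar g}b(\vec{x}(t),t)\vec{u}^k+\partial_t b(\vec{x}(t),t)+\lambda b(\vec{x}(t),t)\geq 0
\end{equation*}
to hold for every $t\in[t^k,t^{k+1})$. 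The mechanism is the margin: rewriting \eqref{eq: sdcbf condition} through \eqref{eq: local zeta} lower-bounds the constraint value at $(\vec{x}^k,t^k)$ by $-\nu(\vec{x}^k,t^k)\geq-\bar\nu(\vec{x}^k,t^k)$. Since the realized trajectory obeys $\vec{x}(t)\in\mathcal{R}(\vec{x}^k,\delta t)$ and $t\in[t^k,t^{k+1})$ by Def.~\ref{eq:reachable set}, the triple $(\vec{x}(t),t,\vec{u}^k)$ is admissible in the minimization \eqref{eq:local nu}, so the inter-sample drift of the constraint along the trajectory is bounded by $\bar\nu(\vec{x}^k,t^k)$; combining this with the sampled lower bound cancels the margin and leaves the displayed constraint nonnegative.

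Before that step is legitimate I would record that $\bar\nu(\vec{x}^k,t^k)$ is well-defined and finite. This is exactly where the sd-differentiability of $b$ enters: by Def.~\ref{dLipschitz} the maps $L_{\bar f}b$, $L_{\bar g}b$, $\partial_t b$ and $\lambda b$ are Lipschitz on $\mathcal{D}\times[t^k,t^{k+1})$, so the integrand of \eqref{eq:local nu} is continuous, and since the minimization runs over the compact set $\mathcal{R}(\vec{x}^k,\delta t)\times[t^k,t^{k+1}]\times\mathbb{U}$ the minimum is attained. Existence and uniqueness of $\vec{x}(t)$ on the interval follow from local Lipschitzness of $f_i,g_i$ and the constancy of $\vec{u}^k$ via Picard--Lindel\"of, with compactness of $\mathbb{X}$ ruling out finite escape, so the solution hypothesis of Def.~\ref{forward invariance} is met.

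With the continuous-time constraint established, I would apply the comparison lemma to $\tfrac{d}{dt}b(\vec{x}(t),t)=L_{\bar f}b+L_{\bar g}b\,\vec{u}^k+\partial_t b\geq-\lambda b(\vec{x}(t),t)$, obtaining $b(\vec{x}(t),t)\geq e^{-\lambda(t-t^k)}b(\vec{x}^k,t^k)$ on $[t^k,t^{k+1})$; hence $b(\vec{x}^k,t^k)\geq 0$ propagates to $\vec{x}(t)\in\mathcal{C}(t)$ throughout the interval. To cross the discontinuity at $t^{k+1}$ I would use both structural conditions of Def.~\ref{dLipschitz}: condition 2 places every discontinuity at a sampling instant, and condition 1, $\lim_{t^-\to t^{k+1}}\mathcal{C}(t)\subseteq\mathcal{C}(t^{k+1})$, together with continuity of the state trajectory, gives $\vec{x}^{k+1}=\lim_{t\to t^{k+1}}\vec{x}(t)\in\mathcal{C}(t^{k+1})$. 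This restores the hypothesis at the next sampling instant, and an induction on $k$ seeded by $\vec{x}^0\in\mathcal{C}(t^0)$ yields forward invariance over $[t^0,\infty]$.

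The crux I expect is the margin comparison: making rigorous that the single worst-case scalar $\bar\nu(\vec{x}^k,t^k)$ computed at the sampling instant dominates the true inter-sample variation of the constraint along the realized trajectory, uniformly in $t\in[t^k,t^{k+1})$. This relies on $\mathcal{R}(\vec{x}^k,\delta t)$ genuinely over-approximating the visited states and on the minimization in \eqref{eq:local nu} being taken over the very input $\vec{u}^k$ later applied; the Lipschitz regularity from Def.~\ref{dLipschitz} is what makes the bound a finite scalar rather than a merely pointwise estimate. I would pay particular attention to the orientation of the difference in \eqref{eq:local nu}, so that combining the sampled bound with the drift bound subtracts the margin rather than adding it.
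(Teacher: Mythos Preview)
Your proposal is correct and follows essentially the same line as the paper: on each sampling interval use the margin $\nu\leq\bar\nu$ together with \eqref{eq: sdcbf condition} to recover the continuous-time inequality $\dot b\geq -\lambda b$, apply the comparison lemma, bridge sampling instants via condition~1) of Def.~\ref{dLipschitz}, and iterate. The one place the paper is tighter is that it couples the invariance argument with a contradiction on the maximal existence time $T$ (using compactness of $\mathcal{C}(t)$ and \cite[Prop.~C.3.6]{sontag2013mathematical} to force $T=t^{k+1}$), whereas your separate appeal to ``compactness of $\mathbb{X}$ ruling out finite escape'' is slightly loose since nothing yet guarantees the trajectory remains in $\mathbb{X}$; the fix is exactly the paper's route of showing containment in the compact $\mathcal{C}(t)$ first and then extending.
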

\begin{proof}
Since \eqref{eq:multi agent dynamics} are Lipschitz and $\vec{u}(t)$ is p.w.c and bounded, then for any $\vec{x}(t^0)\in \mathcal{C}(t^0)\subset \mathcal{D} \subset \mathbb{X}$ the solution $\vec{x}(t)$ to the dynamics in \eqref{eq:multi agent dynamics} under $\vec{u}(t)$ is  absolutely continuous and unique for some interval $J=[t^0, T]\subset \mathbb{R}_+$ \cite[Thm. 54]{sontag2013mathematical}, where $T> t^0$. We next prove by contradiction that $T=\infty$ and that $\vec{x}(t) \in \mathcal{C}(t), \; \forall t\in [0,\infty)$. Namely, assume instead $\vec{x}(t)$ is defined up to a finite $T$ such that $T< t^1= t_0+\delta t$ and let $\vec{x}(t^0)\in \mathcal{C}(t^0)$. Since $b(\vec{x},t) : \mathcal{D}\times \mathbb{R}_+$ is an sdCBF, $\vec{u}(t)\in \prod_i\mathcal{U}_i^{\delta t}$ satisfies \eqref{eq: sdcbf condition} and $\vec{u}(t)=\vec{u}^0 \; \forall t\in [t^0,T]\in [t^0,t^1)$, then it holds the total time derivative $\dot{b}(\vec{x},t)$, is given by $\dot{b}(\vec{x}(t),t) = L_{\bar{f}} b(\vec{x}(t),t) + L_{\bar{g}}b(\vec{x}(t),t)\vec{u}^0 + \partial_t b(\vec{x}(t),t) =  L_{\bar{f}} b(\vec{x}(t),t) + L_{\bar{g}}b(\vec{x}(t),t)\vec{u}^0 + \partial_t b(\vec{x}(t),t) +
\Bigl(L_{\bar{f}} b(\vec{x}(t^0),t^0) + L_{\bar{g}}b(\vec{x}(t^0),t^0)\vec{u}^0 + \partial_t b(\vec{x}(t^0),t^0) + \lambda b(\vec{x}(t^0),t^0)\Bigr) - \Bigl(L_{\bar{f}} b(\vec{x}(t^0),t^0) + L_{\bar{g}}b(\vec{x}(t^0),t^0)\vec{u}^0 + \partial_t b(\vec{x}(t^0),t^0) +  \lambda b(\vec{x}(t^0),t^0)\Bigr)  + \lambda b(\vec{x}(t),t)  - \lambda b(\vec{x}(t),t)  \underset{\eqref{eq:local nu}}{\geq}  L_{\bar{f}} b( \vec{x}^0,t^0) + L_{\bar{g}}b(\vec{x}(t^0),t^0)\vec{u}^0 + \partial_t b(\vec{x}(t^0),t^0) +  \lambda b(\vec{x}(t^0),t^0) + \nu(\vec{x}(t^0),t^0) - \lambda b(\vec{x}(t),t) \underset{\eqref{eq: sdcbf condition}}{\geq} -\lambda b(\vec{x}(t),t)$.
By the Comparison Lemma \cite{hassan2002nonlinear} we have $\vec{x}(t^0)\in \mathcal{C}(t^0) \land \dot{b}(\vec{x}(t),t)\geq -\lambda b(\vec{x}(t),t) ,\, \forall t\in [t^0,T] \Rightarrow \vec{x}(t)\in \mathcal{C}(t), \, \forall t\in [t^0,T]$. Hence, the maximal solution $\vec{x}(t)$ evolves inside the compact set $\mathcal{C}(t)$. Since the maximal solution $\vec{x}(t)$ evolves inside a compact set and since the dynamics \eqref{eq:multi agent dynamics} is time-invariant, then it must be $T=t^1$ \cite[Prop C.3.6]{sontag2013mathematical}, contradicting the assumption $T < t^1$. Now let instead assume $t^1\leq T < t^2 = t_0+2\delta t$. Since $b(\vec{x},t)$ is sd-differentiable then it is known that either $t^1$ is a point of discontinuity for $b(\vec{x},t)$ or not. If $t^1$ is \textit{not} a point of discontinuity we have that $\lim_{t^-\rightarrow t^1}\mathcal{C}(t) = \mathcal{C}(t^1)$ and since $\vec{x}(t)$ is absolutely continuous we also have $\lim_{t^- \rightarrow t^1} \vec{x}(t) = \vec{x}(t^1)$. Thus, since $\vec{u}^0$ respects the sdCBF condition \eqref{eq: sdcbf condition} over $[t^0, t^1)$ we have $\lim_{t^-\rightarrow t^1}\vec{x}(t) =\vec{x}(t^1) \in \lim_{t^-\rightarrow t^1}\mathcal{C}(t)= \mathcal{C}(t^1)$. On the other hand, if $t^1$ is a point of discontinuity then it must be $\lim_{t^- \rightarrow t^1} \mathcal{C}(t) \subseteq \mathcal{C}(t^1)$ by Def. \ref{dLipschitz} for $b(\vec{x},t)$. Similarly to the previous arguments we have $\lim_{t^-\rightarrow t^1}\vec{x}(t) =\vec{x}(t^1) \in \lim_{t^-\rightarrow t^1}\mathcal{C}(t)\subseteq \mathcal{C}(t^1)$ such that we arrive at the conclusion that $\vec{x}(t)\in \mathcal{C}(t),\; \forall t\in [t_0,t_1]$. Morover, by the fact that $\vec{x}(t_1)\in \mathcal{C}(t^1)$ and since $\vec{u}^1$ respects \eqref{eq: sdcbf condition} over the interval $[t^1,t^2)$, we also  have $\vec{x}(t) \in \mathcal{C}(t), \; \forall t\in [t^0,T] \subset [t^0,t^2)$. By \cite[Prop C.3.6]{sontag2013mathematical}  we have again a contradiction since  $\mathcal{C}(t), \; \forall t\in [t^0,T]$, is compact and thus it must be (as in the previous case) $T=t^2$ and not $T< t^2$ as assumed. Since  the same arguments apply for any possible $t^0\leq T<t^k$, we conclude that $T=\infty$ and $\vec{x}(t)\in \mathcal{C}(t)$ for all $t\in [t^0,\infty]$.
\end{proof}
\begin{remark}
   From Def. \ref{sdcbf definition}, the discontinuity points $S=\{ \tau_i\}_i^q$ of $b(\vec{x},t)$ are limited to sampling instants $t^k$. Hence, partial derivatives are well defined within the sampling intervals $[t^k,t^{k+1})$. At the time of discontinuities, we consider the right partial derivatives.
\end{remark}
\subsection{sdCBF for STL tasks}
Given a general STL formula $\phi$ from fragment \eqref{eq:working fragment} such that the predicates $h_{i}$ and $h_{ij}$ respect the concavity assumption in Assmp. \ref{concavity},  then the authors in \cite{lindemann2018control} define a three steps procedure to construct a function $b^{\phi}(\vec{x},t)$ with associated set $\mathcal{C}^{\phi}(t)$ as per \eqref{eq:level set} such that forward invariance of $\mathcal{C}^{\phi}(t)$ implies $\vec{x}(t)\models \phi$ \cite[Sec. A]{lindemann2018control}. The resulting $b^{\phi}(\vec{x},t)$ is an sd-differentiable function as per Def. \ref{dLipschitz} assuming the predicate functions are differentiable and the time intervals $[a,b]$ for the temporal operators of fragment \eqref{eq:working fragment} are such that $a=t^{k_a},b=t^{k_b}$ for some $k_a,k_b\in \mathbb{N}_{0}$. Thus, for each task $\phi_i$ and $\phi_{ij}$,  sd-differentiable functions $b^\phi_{i}(\vec{x}_i,t)$, $b^\phi_{ij}(\vec{e}_{ij},t)$ can be constructed as per steps A,B,C in \cite[Sec. A]{lindemann2018control} such that $b^\phi_{i}(\vec{x}_i,t)$ and $b^\phi_{ij}(\vec{e}_{ij},t)$ are concave in their first argument. The next proposition relates the satisfaction of a task $\phi$ with the forward invariance of the level set $\mathcal{C}^{\phi}(t)$.
\begin{proposition}\label{stl satisfaction prop}
    Consider a task $\phi$ from \eqref{eq:working fragment} and let $b^{\phi}(\vec{x},t)$ be sd-differentiable and constructed according to \cite[Sec. A]{lindemann2018control} with corresponding level set $\mathcal{C}^{\phi}(t)$ as per \eqref{eq:level set}. If $\mathcal{C}^{\phi}(t)$ is forward invariant over $[t^0,\infty]$, then $\vec{x}(t)\models \phi$.
\end{proposition}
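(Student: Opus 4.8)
The plan is to reduce the statement to the defining property of the barrier construction in \cite[Sec. A]{lindemann2018control} by structural induction on the fragment \eqref{eq:working fragment}, since forward invariance directly hands us pointwise nonnegativity of $b^{\phi}$ along the trajectory. First I would unpack the hypothesis: by Def. \ref{forward invariance}, forward invariance of $\mathcal{C}^{\phi}(t)$ over $[t^0,\infty]$ together with $\vec{x}(t^0)\in\mathcal{C}^{\phi}(t^0)$ means $\vec{x}(t)\in\mathcal{C}^{\phi}(t)$, i.e. $b^{\phi}(\vec{x}(t),t)\geq 0$, for all $t\geq t^0$. The whole proof then consists of showing that this pointwise inequality, read through the $\gamma$-encoding attached to each temporal operator, matches the STL semantics recalled after \eqref{eq:working fragment}.

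For the base case $\phi=G_{[a,b]}\mu$, the construction yields a barrier of the form $b^{\phi}(\vec{x},t)=h(\vec{x})-\gamma(t)$ with $\gamma$ a piecewise-linear function arranged so that $\gamma(t)\geq 0$ for every $t\in[a,b]$; hence $b^{\phi}(\vec{x}(t),t)\geq 0$ forces $h(\vec{x}(t))\geq\gamma(t)\geq 0$, i.e. $\mu(h(\vec{x}(t)))=\top$ on all of $[a,b]$, which is exactly $\vec{x}(t)\models G_{[a,b]}\mu$. For $\phi=F_{[a,b]}\mu$ the construction instead singles out one time $t^{\star}\in[a,b]$ at which the encoding enforces $h(\vec{x})\geq 0$; nonnegativity of $b^{\phi}$ at $t^{\star}$ then gives $\mu(h(\vec{x}(t^{\star})))=\top$, witnessing the existential $\exists\tau\in[a,b]$ and hence $\vec{x}(t)\models F_{[a,b]}\mu$. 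Both cases are immediate once the explicit form of $\gamma$ from \cite[Sec. A]{lindemann2018control} is substituted, so I would cite that form rather than re-derive it; the collaborative/independent distinction ($h_{ij}$ vs. $h_i$) is cosmetic, since only the argument of the predicate changes.

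For the conjunction $\phi=\wedge_k\varphi^k$, the construction combines the per-operator barriers through a smooth under-approximation of their minimum, with the key property that $\mathcal{C}^{\phi}(t)\subseteq\bigcap_k\mathcal{C}^{\varphi^k}(t)$ for every $t$. Forward invariance of $\mathcal{C}^{\phi}(t)$ therefore implies $\vec{x}(t)\in\mathcal{C}^{\varphi^k}(t)$ for all $t$ and every $k$, so each conjunct's level set is rendered forward invariant, and the base cases give $\vec{x}(t)\models\varphi^k$ for all $k$; conjoining yields $\vec{x}(t)\models\phi$.

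I expect the main obstacle to be not the logic of the reduction but the bookkeeping that makes it rigorous: verifying that the encoding times embedded in $\gamma$ (the interval endpoints for $G$ and the chosen $t^{\star}$ for $F$) genuinely lie in $[a,b]$ and are aligned with the sampling-instant restriction $a=t^{k_a},\,b=t^{k_b}$ assumed before the proposition, and confirming that the soft-min inclusion $\mathcal{C}^{\phi}(t)\subseteq\bigcap_k\mathcal{C}^{\varphi^k}(t)$ points in the correct (conservative) direction rather than being an equality. Both are guaranteed by the cited construction, so the proof stays short; the only real risk is asserting the under-approximation inclusion in the wrong direction.
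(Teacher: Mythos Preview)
Your proposal is correct. The paper's own proof of this proposition is not an argument at all: it reads, in its entirety, ``See \cite[Cor.~1]{larsTCNS} and \cite[Thm.~1]{lindemann2018control}.'' In other words, the authors defer the claim entirely to the cited constructions, whereas you have unpacked those constructions and given a self-contained structural-induction argument over the fragment \eqref{eq:working fragment}. The substance of what you wrote is exactly what those references establish (the $\gamma$-encoding for $G$ and $F$, and the soft-min under-approximation for conjunction with the inclusion $\mathcal{C}^{\phi}(t)\subseteq\bigcap_k\mathcal{C}^{\varphi^k}(t)$), so the two routes coincide in content; yours is simply explicit where the paper is deferential. The bookkeeping risks you flag---direction of the soft-min inclusion and alignment of the encoding times with the sampling grid---are real but are handled inside the cited construction, so your caution is well placed and no gap remains.
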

\begin{proof}
See \cite[Cor. 1 ]{larsTCNS}  and \cite[Thm. 1]{lindemann2018control}
\end{proof}
The following assumption is further considered.
\begin{assumption}\label{zero gradient assumption}
    Consider $\phi$ from \eqref{eq:working fragment}, the associated function $b^{\phi}(\vec{x},t)$ from Prop. \ref{stl satisfaction prop}. Furthermore,  let $\mathcal{B}^{\phi}(t) := \{ \vec{x}\in \mathcal{D} | \partial_{\vec{x}}b(\vec{x},t) = \vec{0} \}$ and 
    $\zeta^{\phi}(\vec{x},t) = \lambda b^{\phi}(\vec{x},t) + \partial_t b(\vec{x},t) + \nu(\vec{x},t)$  from \eqref{eq: local zeta}. Then $\lambda\in \mathbb{R}_+$ can be chosen such that $\zeta^{\phi}(\vec{x},t)>  \chi,\; \forall \vec{x}\in \mathcal{B}^{\phi}(t) $ for some $\chi > 0$.
\end{assumption}
\begin{remark}\label{gradient remark}
    From Def. \ref{sdcbf definition} we have $L_{\bar{g}}b^{\phi}(\vec{x},t)=\vec{0}\Leftrightarrow \partial_{\vec{x}}b^\phi(\vec{x},t)=\vec{0} \Rightarrow \vec{x}\in \mathcal{B}^{\phi}(t)$. Thus Assmp. \ref{zero gradient assumption} allows for the satisfaction of the sdCBF constraint \eqref{eq: sdcbf condition} when $\vec{x}(t)\in \mathcal{B}^{\phi}(t)$. Since by construction $b^{\phi}(\vec{x},t)$ is concave then $\mathcal{B}^{\phi}(t)$ corresponds to the set of maximisers of $b^{\phi}(\vec{x},t)$ at time $t$ such that $b^{\phi}(\vec{x},t) = b^{\phi}_{max}(t) = max_{\vec{x}\in \mathcal{D}}\,  b(\vec{x},t) \; \forall \vec{x}\in \mathcal{B}^{\phi}(t)$. Thus, Assmp. \ref{zero gradient assumption} implies choosing a sufficiently large $\lambda \in \mathbb{R}_+$ such that $\lambda b^{\phi}_{max}(t) + \partial_t b^{\phi}(\vec{x},t) + \nu^{\phi}(\vec{x},t)>  \chi \; \forall \vec{x}\in \mathcal{B}^{\phi}(t)\, \forall t\in \mathbb{R}_+$.  
\end{remark}

\section{Decentralized Control Solution}\label{control}
In this section, we define our control solution for Problem \ref{main problem}. The presentation is ordered as follows: In Sec. \ref{novel sol} we present the global decentralized control scheme applied to satisfy objective 1) in Problem \ref{main problem} assuming $\mathcal{G}_{\psi}\subseteq \mathcal{G}_{c}(t)$, for all $t$. An algorithmic implementation of our control scheme is given in Sec. \ref{token passing algorithm section} - \ref{control reduction sub section}. Communication maintenance is addressed in Corollary \ref{communicationc corollary}. For ease of presentation, we often adopt the shorthand notations $b_{ij}(t^k):= b_{ij}(\vec{e}_{ij}^k,t^k)$ and $b_i(t^k):= b_{i}(\vec{x}_{i}^k,t^k)$ when clear from the context and to reduce the notation.
\subsection{Decentralized Control for STL satisfaction}\label{novel sol}
Consider a given task $\phi_{ij}$ involving the relative state $\vec{e}_{ij}$ among agent $i$ and $j$, with associated sdCBF $b_{ij}^{\phi}(\vec{e}_{ij},t)$. By expanding \eqref{eq: sdcbf condition} it follows that 
\begin{equation}\label{eq:collaborative form lie derivative}
\begin{aligned}
L_{\bar{f}}b^{\phi}_{ij}(\vec{e}_{ij},t)+&L_{\bar{g}}b^{\phi}_{ij}(\vec{e}_{ij},t)\vec{u} = 
\\\hspace{1cm}& L_{f_i}b^{\phi}_{ij}(\vec{e}_{ij},t)+L_{g_i}b^{\phi}_{ij}(\vec{e}_{ij},t)\vec{u}_i + \\& \hspace{1cm}L_{f_j}b^{\phi}_{ij}(\vec{e}_{ij},t)+L_{g_j}b^{\phi}_{ij}(\vec{e}_{ij},t)\vec{u}_j,
\end{aligned}
\end{equation}
such that, for agent $i$, we define the \textit{worst impact} and \textit{best impact} over the satisfaction of $\phi_{ij}$, respectively as
\begin{subequations}\label{eq:impacts}
\begin{align}
{}^{i}\underbar{$\epsilon$}_{ij}^{\phi}(t^k)
    = \min_{\vec{u}_i\in \gamma^k_i\mathbb{U}_i} L_{f_i} b^{\phi}_{ij}(t^k) + L_{g_i} b^{\phi}_{ij}(t^k)\vec{u}_i \label{eq: follower worse impact}\\
    {}^{i}\bar{\epsilon}_{ij}^{\phi}(t^k) =\max_{\vec{u}_i\in \gamma^k_i\mathbb{U}_i} L_{f_i} b^{\phi}_{ij}(t^k) + L_{g_i} b^{\phi}_{ij}(t^k)\vec{u}_i, \label{eq: leader best impact}
   \end{align}
\end{subequations}
where $\gamma^k_{i}\in (0,1],\; \forall i\in \mathcal{V}_i, \forall k\in \mathbb{N}_0$, is denoted as the \textit{control reduction factor} which applies shrinks the control set $\mathbb{U}_i$ to $\gamma_i^k\mathbb{U}_i$. The purpose of $\gamma_i^k$ is to bound the value of the best and worse impacts as per \eqref{eq:impacts} to enforce the satisfaction of the sdCBF \eqref{eq: sdcbf condition} over $b_{ij}^{\phi}(\vec{e}_{ij},t)$ to satisfy task $\phi_{ij}$ as clarified in Section \ref{control reduction sub section}. \par
In addition to the best and worse impact terms in \eqref{eq:impacts},  the set of tokens  $T_i = \{ q_{ij} | \forall j \in \mathcal{N}_{\psi}(i) \}$ is introduced such that $q_{ij}\in \{l, f, u\}$  where $l$,$u$,$f$ stand for \textit{leader}, \textit{undefined} and \textit{follower} respectively and  
\begin{equation}\label{eq:token rules}
\begin{gathered}
q_{ij}=l \Leftrightarrow q_{ji}=f,\quad  q_{ij}=u \Leftrightarrow q_{ji}=u,\\ q_{ij}=l \Rightarrow q_{ik}\neq l \; \forall k\in \mathcal{N}_{\psi}(i)\setminus \{j\}.
\end{gathered}
\end{equation}
The purpose of the tokens $q_{ij}$ is to define an edge leader for each $(i,j)\in \mathcal{E}_{\psi}$. The rules in \eqref{eq:token rules} ensure that for each $(i,j) \in \mathcal{E}_{\psi}$, either $i$ or $j$ is a leader and that each $i$ is the leader of at most one edge. The undefined token $u$ only serves for initialization of Algorithm \ref{alg:leadership propagation}, from which token sets $T_i$ are constructed in a decentralized fashion (Sec. \ref{token passing algorithm section}). Moreover, let $\mathcal{L}_{\psi}(i):= \{j \in \mathcal{N}_{\psi}(i)| q_{ij} = f\}$ be the leader set for agent $i$, corresponding to all the agents $j$ for which $i$ is a follower of $(i,j)\in \mathcal{E}_{\psi}$. 
From these new terms, the main result of this work is presented next
\begin{theorem}\label{main control theorem}
Consider \eqref{eq:multi agent dynamics}, a task graph $\mathcal{G}_{\psi} \subseteq \mathcal{G}_c(t),\, \forall t\in \mathbb{R}_+$ such that tasks $\phi_{ij} \; \forall (i,j) \in \mathcal{E}_{\psi}$ and $\phi_i, \; \forall i\in \mathcal{V}$ are associated with corresponding $b_{ij}^{\phi}(\vec{e}_{ij},t)$, $b_{i}^{\phi}(\vec{x}_{i},t)$ constructed as per \cite[Sec. A]{lindemann2018control}. Furthermore, let $\zeta_{ij}^\phi(\vec{e}_{ij},t)$, $\zeta_{i}^\phi(\vec{x}_i,t)$ as per \eqref{eq: local zeta} and respecting Assmp. \ref{zero gradient assumption}. Let each agent $i$ be subject to the p.w.c input $\vec{u}_i(t)\in \mathcal{U}_{i}^{\delta t}$ such that  $\vec{u}_{i}^k$ minimizes 
\begin{subequations}
\label{eq:new decentralized optimization with tasks}
    \begin{align}
        & \hspace{1cm} \min_{\vec{u}_i,s_i,s_{ij}} \vec{u}_i^T\vec{u}_i + c_i \, s_{i} + \sum_{j \in \mathcal{L}_{\psi}(i)} c_{ij}\,s_{ij} \\
        \begin{split}
        &L_{f_i}b^{\phi}_{ij}(t^k) + L_{g_i}b^{\phi}_{ij}(t^k)\vec{u}_{i} \geq -\zeta^{\phi}_{ij}(t^k) -{}^{j}\underbar{$\epsilon$}^{\phi}_{ij}(t^k),\, q_{ij}=l\end{split}\label{eq: leader constraint}\\
        &L_{f_i}b^{\phi}_{ij}(t^k) + L_{g_i}b^{\phi}_{ij}(t^k)\vec{u}_{i} \geq -\frac{1}{2}\zeta^{\phi}_{ij}(t^k) - s_{ij}, \; \forall j\in \mathcal{L}_{\psi}(i) \label{eq:follower constraints}\\
        &L_{f_i}b^{\phi}_{i}(t^k) + L_{g_i}b^{\phi}_{i}(t^k)\vec{u}_{i} \geq -\zeta^{\phi}_{i}(t^k) - s_{i}, \label{eq:independednt slack constraint}\\
        &s_i \geq 0,\; s_{ij} \geq 0 \; \forall j\in \mathcal{L}_{\psi}(i),\; \vec{u}_i \in \gamma_i^k \mathbb{U}_i, \label{eq: slack variables}
    \end{align}
\end{subequations}
where $s_{i},s_{ij}$ are slack variables with corresponding positive cost coefficients $c_i,c_{ij}\in \mathbb{R}_+$, $\gamma_i^k\in (0,1] \;\forall i\in \mathcal{V},\, \forall k\in \mathbb{N}_0 $  and ${}^{j}\underbar{$\epsilon$}_{ij}^{\phi}(t^k)$ is the worse impact from the follower $j$ over task $\phi_{ij}$ as per \eqref{eq: follower worse impact}.
If \eqref{eq:new decentralized optimization with tasks} is feasible for each $t^k$ with $k\in \mathbb{N}_0$ then the solution $\vec{x}(t)$ to \eqref{eq:multi agent dynamics} under $\vec{u}(t)=[\vec{u}_i(t)^T]^T_{i\in \mathcal{V}}$ is such that $\vec{x}(t)\models \psi_{col}$. Moreover, if $s_i=0$ for all $t^k$, then  $\vec{x}(t)\models \psi_{col} \land \psi_{ind}$.
\end{theorem}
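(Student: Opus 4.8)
The plan is to reduce the claim to the single-barrier invariance machinery already established. By Proposition \ref{stl satisfaction prop} it suffices to show that each level set $\mathcal{C}_{ij}^{\phi}(t)$ (for every collaborative edge $(i,j)\in\mathcal{E}_{\psi}$) and, in the case $s_i\equiv 0$, each $\mathcal{C}_{i}^{\phi}(t)$ (for every independent task) is forward invariant over $[t^0,\infty]$; by Lemma \ref{lemma local sample margin} this in turn follows once the full sdCBF inequality \eqref{eq: sdcbf condition} is verified at every sampling instant $t^k$ for the corresponding barrier. Hence the whole argument amounts to recovering \eqref{eq: sdcbf condition} for each barrier from the local QP constraints \eqref{eq:new decentralized optimization with tasks}, after which a conjunction over all edges and all independent tasks yields $\psi_{col}$, respectively $\psi_{col}\land\psi_{ind}$.

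For a fixed edge $(i,j)\in\mathcal{E}_{\psi}$, the token rules \eqref{eq:token rules} assign exactly one leader; assume without loss of generality $q_{ij}=l$ and $q_{ji}=f$, so that $i\in\mathcal{L}_{\psi}(j)$. The key step is the additive decomposition \eqref{eq:collaborative form lie derivative}, which splits the joint Lie-derivative term of $b_{ij}^{\phi}$ into an $i$-contribution and a $j$-contribution. The leader's hard constraint \eqref{eq: leader constraint} bounds the $i$-contribution below by $-\zeta_{ij}^{\phi}(t^k)-{}^{j}\underbar{$\epsilon$}^{\phi}_{ij}(t^k)$. For the $j$-contribution, I would use that the follower's computed input satisfies $\vec{u}_j^k\in\gamma_j^k\mathbb{U}_j$ by \eqref{eq: slack variables}, so that by the very definition of the worst impact \eqref{eq: follower worse impact} as a minimum over $\gamma_j^k\mathbb{U}_j$, the $j$-contribution is bounded below by ${}^{j}\underbar{$\epsilon$}^{\phi}_{ij}(t^k)$. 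Adding the two bounds, the worst-impact terms cancel and the left-hand side of \eqref{eq:collaborative form lie derivative} is bounded below by $-\zeta_{ij}^{\phi}(t^k)$, which is precisely \eqref{eq: sdcbf condition} for $b_{ij}^{\phi}$. Lemma \ref{lemma local sample margin} then gives forward invariance of $\mathcal{C}_{ij}^{\phi}(t)$ and Proposition \ref{stl satisfaction prop} gives $\vec{x}(t)\models\phi_{ij}$; since this holds for every edge, $\vec{x}(t)\models\psi_{col}$.

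For the independent tasks, note that $b_i^{\phi}$ depends only on $\vec{x}_i$, hence $L_{\bar{f}}b_i^{\phi}=L_{f_i}b_i^{\phi}$ and $L_{\bar{g}}b_i^{\phi}\vec{u}=L_{g_i}b_i^{\phi}\vec{u}_i$. When $s_i=0$ for all $t^k$, the constraint \eqref{eq:independednt slack constraint} reads $L_{f_i}b_i^{\phi}(t^k)+L_{g_i}b_i^{\phi}(t^k)\vec{u}_i\geq-\zeta_i^{\phi}(t^k)$, which is exactly \eqref{eq: sdcbf condition} for $b_i^{\phi}$; the same Lemma--Proposition chain yields $\vec{x}(t)\models\phi_i$ and, by conjunction over $i\in\mathcal{V}$, $\vec{x}(t)\models\psi_{ind}$, so that $\vec{x}(t)\models\psi_{col}\land\psi_{ind}$.

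The main subtlety, and the step I expect to require the most care, is the follower bound: it relies on the follower genuinely restricting its input to the reduced set $\gamma_j^k\mathbb{U}_j$ used by the leader when evaluating ${}^{j}\underbar{$\epsilon$}^{\phi}_{ij}(t^k)$, and on the feasibility hypothesis guaranteeing that such an input exists at every $t^k$ (here Assumption \ref{zero gradient assumption} underlies feasibility by ensuring the constraints remain satisfiable even at points where $\partial_{\vec{x}}b_{ij}^{\phi}=\vec{0}$). One must also invoke the consistency of the single edge barrier $b_{ij}^{\phi}$ shared by both agents and the uniqueness of the leader per edge from \eqref{eq:token rules}, so that each collaborative constraint is accounted for exactly once without conflicting leader assignments. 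Notably, the soft follower constraint \eqref{eq:follower constraints} and the slack $s_{ij}$ play no role in the soundness argument; they only bias the follower toward actively assisting the task beyond its guaranteed worst-case contribution.
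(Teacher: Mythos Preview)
Your proposal is correct and follows essentially the same approach as the paper: split the joint Lie derivative via \eqref{eq:collaborative form lie derivative}, bound the leader's contribution by \eqref{eq: leader constraint} and the follower's by the definition of ${}^{j}\underbar{$\epsilon$}^{\phi}_{ij}$ together with $\vec{u}_j^k\in\gamma_j^k\mathbb{U}_j$, then invoke Lemma~\ref{lemma local sample margin} and Proposition~\ref{stl satisfaction prop}; the independent case and the role of Assumption~\ref{zero gradient assumption} at zero-gradient points are likewise handled as in the paper. The only detail the paper spells out more explicitly is the chain-rule observation $\partial_{\vec{x}_i}b_{ij}^{\phi}=-\partial_{\vec{x}_j}b_{ij}^{\phi}$, which shows $L_{g_i}b_{ij}^{\phi}=\vec{0}\Leftrightarrow L_{g_j}b_{ij}^{\phi}=\vec{0}$ and hence ${}^{j}\underbar{$\epsilon$}^{\phi}_{ij}(t^k)=0$ at the singular points you allude to.
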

\begin{proof}
    We first prove $\vec{x}(t)\models \psi_{col}$. Consider agent $i$ with collaborative task $\phi_{ij}$ and sd-differentiable function $b_{ij}^{\phi}(\vec{e}_{ij},t)$ with $\zeta_{ij}^{\phi}(\vec{e}_{ij},t)$ as per \eqref{eq: local zeta} and $\mathcal{C}^{\phi}_{ij}(t)$ as per \eqref{eq:level set}. Further let $q_{ij}=l$ and $q_{ji}=f$ ($i$ is leader of $(i,j) \in \mathcal{E}_{\psi}$). Writing the left hand side of the sdCBF condition \eqref{eq: sdcbf condition} over $b_{ij}^{\phi}(\vec{e}_{ij},t)$ gives 
    $L_{f_i}b^{\phi}_{ij}(t^k) + L_{g_i}b^{\phi}_{ij}(t^k)\vec{u}_{i} + L_{f_j}b^{\phi}_{ij}(t^k) + L_{g_j}b^{\phi}_{ij}(t^k)\vec{u}_{j}  \underset{\eqref{eq: leader constraint}}{\geq} -\zeta_{ij}^{\phi}(t^k) -  {}^{j}\underbar{$\epsilon$}^{\phi}_{ij}(t^k) + L_{f_j}b^{\phi}_{ij}(t^k) + L_{g_j}b^{\phi}_{ij}(t^k)\vec{u}_{j} \underset{\eqref{eq: follower worse impact}}{\geq}  -\zeta_{ij}^{\phi}(t^k)$,
    which corresponds to the satisfaction of the sdCBF constraint \eqref{eq: sdcbf condition}. Note that at the singularity points for which $L_{g_i}b^{\phi}_{ij}(\vec{e}_{ij},t)=\vec{0}$ we have from the chain rule that $\partial_{\vec{x}_i}b^{\phi}_{ij}(\vec{e}_{ij},t) = -\partial_{\vec{x}_{j}}b^{\phi}_{ij}(\vec{e}_{ij},t)$ and thus by Def. \ref{sdcbf definition} we have $L_{g_i}b^{\phi}_{ij}(\vec{e}_{ij},t)=\vec{0} \Leftrightarrow L_{g_j}b^{\phi}_{ij}(\vec{e}_{ij},t)=\vec{0} \Rightarrow \partial_{\vec{x}} b_{ij}^{\phi}(\vec{e}_{ij},t) = \vec{0} \Rightarrow {}^{j}\underbar{$\epsilon$}^{\phi}_{ij}(t^k) = 0 $. Thus by Assmp. \ref{zero gradient assumption} on $\zeta_{ij}^{\phi}$ we have that \eqref{eq: leader constraint} is satisfied independently of $\vec{u}_i^k$ or $\vec{u}_j^k$ in this situation. Letting $\mathcal{C}^{\phi}_{ij}(t)$ be the level set of $b_{ij}^{\phi}$ as per Prop. \ref{eq:level set}, then from Lemma \ref{lemma local sample margin} we have $\vec{x}(t)\in \mathcal{C}^{\phi}_{ij}(t),\, \forall t\in \mathbb{R}_+$ and  thus $\vec{x}(t)\models \phi_{ij}$ by Prop. \ref{stl satisfaction prop}. Since the argument can be repeated for every edge $(i,j)\in \mathcal{E}_{\psi}$ then $\vec{x}(t)\models \psi_{col}$. Turning to the satisfaction of $\psi_{ind}$, consider independent tasks $\phi_i$ with associated $b_{i}^{\phi}(\vec{x}_i,t)$, $\zeta^{\phi}_{i}(\vec{x}_i,t)$ and level set $\mathcal{C}^{\phi}_{i}(t)$ as per \eqref{eq:level set}. It is clear that if $s_{i}=0 \; \forall i\in \mathcal{V}$ and for all $t^k$ then constraint \eqref{eq:independednt slack constraint} corresponds to the sdCBF condition 
    \eqref{eq: sdcbf condition} and thus $\vec{x}_i(t)\in \mathcal{C}^{\phi}_i(t) \, \forall t\in \mathbb{R}_+$. From Prop. \ref{stl satisfaction prop} we then have $\vec{x}(t)\models \phi_{i}, \, \forall i\in \mathcal{V}\Rightarrow \vec{x}(t)\models \psi_{ind}$.
\end{proof}
\par 
It is highlighted that constraint \eqref{eq: leader constraint} represents the sdCBF constraint \eqref{eq: sdcbf condition} over $b_{ij}^{\phi}(\vec{e}_{ij},t)$, as per \eqref{eq:collaborative form lie derivative}, where the follower $j$ is assumed to select the input $\vec{u}_j \in \gamma_{j}^k\mathbb{U}_j$ associated with the worst impact as per \eqref{eq: follower worse impact}. The control law \eqref{eq:new decentralized optimization with tasks} differs from previous controllers \cite{panagou2015distributed,larsTCNS,nonsmoothBarriers} as follows: \textbf{1)} Unlike \cite{larsTCNS,nonsmoothBarriers}, agent $i$ requires only local knowledge of state $\vec{x}_i$ and $\vec{e}_{ij}, \forall j\in \mathcal{N}_{\psi}(i)$. \textbf{2)} In contrast to \cite{larsTCNS,panagou2015distributed}, and akin to \cite{nonsmoothBarriers}, each agent is subject to $|\mathcal{N}_{\psi}(i)|$ constraints from collaborative tasks $\phi_{ij}$ plus one constraint from independent task $\phi_i$. However, unlike \cite{nonsmoothBarriers}, all constraints \eqref{eq:follower constraints}-\eqref{eq:independednt slack constraint} are satisfied only up to a slack variable, while only \eqref{eq: leader constraint} is strictly satisfied, corresponding to the unique edge $(i,j)\in\mathcal{E}_{\psi}$ for which $i$ is the leader ($q_{ij}=l$). \textbf{3)} Unlike \cite{larsTCNS,panagou2015distributed,nonsmoothBarriers,distributedSecondOrder}, the control law \eqref{eq:new decentralized optimization with tasks} is implemented in a sampled-data fashion which is suitable for embedded-systems implementation. \par
In Theorem \ref{main control theorem} it is assumed that $\mathcal{G}_\psi \subseteq \mathcal{G}_c(t)$. Nevertheless, a communication task can be explicitly given over each edge by imposing $\varphi^{com}_{ij}= G_{[0,\infty]}h_{ij}^{com}$ where $h_{ij}^{com} = \|\vec{p}_{ij}\|^2 -r_c^2$, which respects Assumption \ref{concavity}.
\begin{corollary}\label{communicationc corollary}
    Consider \eqref{eq:multi agent dynamics} and task graph $\mathcal{G}_{\psi} \subseteq \mathcal{G}_c(t^0)$ such that the conjunction of tasks $\phi_{ij}$ is such that a communication task $\varphi^{com}_{ij}= G_{[0,\infty]}h_{ij}^{com}$ with $h_{ij}^{com} = \|\vec{p}_{ij}\|^2 -r_c^2$ is included in $\phi_{ij}$ as per \eqref{eq:multi agent spec}. If each agent $i$ is subject to a p.w.c input $\vec{u}_i(t)\in \mathcal{U}_{i}^{\delta t}$ as per Thm. \ref{main control theorem}, then $\vec{x}(t) \models \psi_{col} \Rightarrow \vec{x}(t) \models \varphi^{com}_{ij},\;  \forall (i,j) \in \mathcal{E}_{\psi}$ and $\mathcal{G}_\psi \subseteq \mathcal{G}_c(t),\; \forall t\in \mathbb{R}_+$.
\end{corollary}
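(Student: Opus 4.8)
The plan is to combine two observations: first, that the communication requirement is syntactically embedded in $\psi_{col}$, so it is enforced by exactly the same machinery as any other collaborative task; and second, that the global‑in‑time communication hypothesis of Theorem \ref{main control theorem} can be \emph{bootstrapped} from the initial condition $\mathcal{G}_\psi \subseteq \mathcal{G}_c(t^0)$ rather than assumed outright.

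First I would record the syntactic reduction. Since $\varphi^{com}_{ij}$ is added as a conjunct of $\phi_{ij}$, and $\phi_{ij}$ is in turn a conjunct of $\psi_{col}=\wedge_{i\in\mathcal{V}}(\wedge_{j\in\mathcal{N}_\psi(i)}\phi_{ij})$, the STL semantics of conjunction give $\vec{x}(t)\models \psi_{col}\Rightarrow \vec{x}(t)\models \varphi^{com}_{ij}$ for every $(i,j)\in\mathcal{E}_\psi$, which discharges the first half of the claim directly. For the second half I would unfold the always‑operator semantics recalled in Sec.~\ref{stl section}: $\vec{x}(t)\models G_{[0,\infty]}\mu(h^{com}_{ij})$ is equivalent to $\mu(h^{com}_{ij}(\vec{x}(t)))=\top$ for all $t$, i.e.\ $\|\vec{p}_{ij}(t)\|\le r_c$ for all $t\ge 0$ (the predicate $h^{com}_{ij}$ is, up to sign, the concave communication‑maintenance predicate of Sec.~\ref{stl section}, so Assmp.~\ref{concavity} holds). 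By the definition of $\mathcal{E}_c(t)$ this is exactly $(i,j)\in\mathcal{E}_c(t)$ for all $t$, and quantifying over all task edges yields $\mathcal{G}_\psi\subseteq\mathcal{G}_c(t)$ for all $t\in\mathbb{R}_+$.

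The subtle point — and the step I expect to be the main obstacle — is that both observations presuppose $\vec{x}(t)\models\psi_{col}$, whereas Theorem \ref{main control theorem} delivers $\psi_{col}$ only under the standing hypothesis $\mathcal{G}_\psi\subseteq\mathcal{G}_c(t)$ for \emph{all} $t$, which is precisely part of what we must prove. To break this circularity I would argue on a maximal interval. Let $T^\star:=\sup\{T\ge t^0 : \mathcal{G}_\psi\subseteq\mathcal{G}_c(t)\ \forall t\in[t^0,T]\}$; the hypothesis $\mathcal{G}_\psi\subseteq\mathcal{G}_c(t^0)$ gives $T^\star\ge t^0$ and, crucially, guarantees that at every $t^k<T^\star$ each leader $i$ can sense $\vec{e}_{ij}$ for its task neighbours, so the decentralized program \eqref{eq:new decentralized optimization with tasks} is well posed and its leader constraint \eqref{eq: leader constraint} holds at every sampling instant in $[t^0,T^\star)$. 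Replaying the argument in the proof of Theorem \ref{main control theorem} on this interval, the sdCBF condition \eqref{eq: sdcbf condition} for each $b^\phi_{ij}$ is met, so Lemma \ref{lemma local sample margin} renders $\mathcal{C}^\phi_{ij}(t)$ forward invariant; since $\vec{x}(t^0)\in\mathcal{C}^\phi_{ij}(t^0)$ (its communication component being secured by $\mathcal{G}_\psi\subseteq\mathcal{G}_c(t^0)$), the constraint $\|\vec{p}_{ij}(t)\|\le r_c$ holds throughout $[t^0,T^\star)$.

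Finally I would close the bootstrap by continuity. As $\vec{x}(t)$ is absolutely continuous (Lemma \ref{lemma local sample margin}), $\vec{p}_{ij}$ is continuous, so $\|\vec{p}_{ij}(t)\|\le r_c$ extends to $t=T^\star$ and, with forward invariance carrying across any discontinuity at $T^\star$ by the locally‑expanding condition of Def.~\ref{dLipschitz}, slightly beyond it; this contradicts the maximality of $T^\star$ unless $T^\star=\infty$. Hence $\mathcal{G}_\psi\subseteq\mathcal{G}_c(t)$ holds for all $t$, the standing hypothesis of Theorem \ref{main control theorem} is met globally, $\vec{x}(t)\models\psi_{col}$ follows, and the two reductions above complete the proof. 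The one genuinely delicate aspect is exactly this self‑referential guarantee — communication must be maintained in order to compute the very input that maintains it — which is why the maximal‑interval argument, rather than a direct appeal to the theorem, is required.
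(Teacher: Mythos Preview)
The paper states this corollary without proof; the exposition simply moves on to the token and control-reduction algorithms. The implicit argument is the purely syntactic one you give in your first two paragraphs: since $\varphi^{com}_{ij}$ is a conjunct of $\phi_{ij}$, which is a conjunct of $\psi_{col}$, satisfaction of $\psi_{col}$ entails satisfaction of each $\varphi^{com}_{ij}$; unfolding the $G_{[0,\infty]}$ semantics then yields $\|\vec{p}_{ij}(t)\|\le r_c$ for all $t$, i.e.\ $\mathcal{G}_\psi\subseteq\mathcal{G}_c(t)$ for all $t\in\mathbb{R}_+$. Read literally, the corollary's conclusion is the conditional $\vec{x}(t)\models\psi_{col}\Rightarrow(\cdots)$, so this syntactic reduction already suffices and matches what the paper evidently has in mind.

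Your maximal-interval bootstrap in the last two paragraphs goes beyond the literal statement and addresses a circularity the paper leaves open: Theorem~\ref{main control theorem} assumes $\mathcal{G}_\psi\subseteq\mathcal{G}_c(t)$ for \emph{all} $t$, whereas the corollary only supplies it at $t^0$, so one cannot simply invoke the theorem to obtain $\vec{x}(t)\models\psi_{col}$ in the first place. Your continuity argument on $T^\star$ is sound and is precisely the step needed to close objective~2) of Problem~\ref{main problem} from the initial condition alone. The paper does not supply this step --- it relies on the reader accepting that embedding $\varphi^{com}_{ij}$ into $\psi_{col}$ renders the global connectivity hypothesis self-fulfilling --- so your proposal is both correct and more rigorous than the paper on this point.
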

The subsequent sections further elucidate how the token sets $T_i$ are defined and how ${}^j\underbar{$\epsilon$}_{ij}^{\phi}(t^k)$ and $\zeta_{ij}^{\phi}(t^k)$ are computed online at each $t^k$. 
\subsection{Token passing algorithm}\label{token passing algorithm section}
To implement the decentralized control law in \eqref{eq:new decentralized optimization with tasks}, edge leaders should be defined as they are responsible for the satisfaction of the collaborative tasks $\phi_{ij}$ by satisfying constraint \eqref{eq: leader constraint}. Since $\mathcal{G}_{\psi}$ is acyclic, we define the token sets $T_i$ from the leaf nodes to the root node of $\mathcal{G}_{\psi}$ in a decentralized way. The process starts by initialising the sets $T_i$ with all undefined tokens $q_{ij}=u$ (line \ref{lst:line:initialise}). Then, the leaf nodes in $\mathcal{F}_{\psi} = \{ i \in \mathcal{V} |\, |\mathcal{N}_{\psi}(i)|=1 \}$ immediately set their unique token $q_{ij}$ to $q_{ij}=l$ (line \ref{lst:line:leaf start}). At the same time, the non-leaf nodes $i\in \mathcal{V}\setminus \mathcal{F}_{\psi}$ continuously require tokens $q_{ji}$ from their neighbours $j\in \mathcal{N}_{\psi}(i)$ and update $q_{ij}$ from $q_{ij}=u$ to $q_{ij}=f$ as soon as a token  $q_{ji}=l$ is detected (lines \ref{lst:line:checking start}-\ref{lst:line:checking end}). When agent $i\in \mathcal{V}\setminus \mathcal{F}_{\psi}$ only has one undefined token left $q_{ij}=u$, it sets  $q_{ij} = l$ (line \ref{lst:line:final resolution}). Thus the sets $T_i$ are progressively defined from leaf nodes to the root node in at most $\lceil \frac{\rho}{2} \rceil$ rounds where $\rho$ is the diameter of $\mathcal{G}_{\psi}$ and such that rules \eqref{eq:token rules} are satisfied. Alg. \ref{alg:leadership propagation} terminates with either one node that is follower of all its collaborative tasks ($q_{ij}=f, \; \forall j\in \mathcal{N}_{\psi}(i)$), or with two nodes $i$,$j$ with a single undefined token $q_{ij}=q_{ji}=u$. In this case, some simple priority resolution can be applied to set $q_{ij}=f$ and $q_{ji}=l$, or \textit{vice-versa}.
\begin{algorithm}[h]
\caption{Leadership Token Passing Algorithm (Agent $i$)}\label{alg:leadership propagation}
\begin{algorithmic}[1]
\STATE Initialise $T_i = \{q_{ij}= u\; \forall j\in \mathcal{N}_{\psi}(i) \}$ \label{lst:line:initialise}
\IF{$i \in \mathcal{F}_\psi$} \label{lst:line:leaf start}
    \STATE Set the unique $q_{ij}=l$  
\ELSE
    \WHILE{ There is more than one $q_{ij}=u$ } \label{lst:line:checking start}
        \FOR{$j \in \mathcal{N}_{\psi}(i)$}
            \STATE Require $q_{ji}$ from $j$
            \IF{$q_{ji}= l$ and $q_{ij} = u$}
                \STATE $q_{ij} = f$
            \ENDIF
        \ENDFOR
    \ENDWHILE \label{lst:line:checking end}
    \STATE Set last token $q_{ij}=l$ if any \label{lst:line:final resolution}
\ENDIF
\end{algorithmic}
\end{algorithm}
\begin{algorithm}[b]
\caption{Control Reduction Computation (Agent $i$)}\label{alg:control reduction alg}
\begin{algorithmic}[1]
\STATE Compute ${}^{i}\Upsilon^{\phi}_{ij}(t^k)$ for each $j\in \mathcal{N}_{\psi}(i)$ and $\nu_{i}^{\phi}(t^k)$ \label{lst:line:entry point}
\IF{$i$ in $ \mathcal{F}_{\psi}$}\label{lst:line:leaf node gamma 1}
    \STATE Set $\gamma^k_i=1$
    \STATE Compute $\bar{\vec{u}}^k_{i}$ \eqref{eq:controls} and best impact  $^{i}\bar{\epsilon}_{ij}^{\phi}(t^k)$ \eqref{eq: leader best impact}.
\ELSE
    \WHILE{Any $j \in \mathcal{L}_{\phi}(i)$ has not yet fixed $\gamma^k_j$}
      \FOR{$j\in \mathcal{L}_{\phi}(i)$} 
            \IF{$j$ has fixed $\gamma^k_j$}  \label{lst:line:checking leaders availability}
                \STATE Receive $^{j}\bar{\epsilon}_{ij}^{\phi}(t^k)$ \eqref{eq: leader best impact}
                \STATE Receive ${}^{j}\Upsilon_{ij}^{\phi}(t^k)$  \eqref{eq: upsilon}
                \STATE Compute $\nu_{ij}^{\phi}(t^k)$ \eqref{eq: nu by sum} and $\zeta_{ij}^{\phi}(t^k)$  \eqref{eq:local nu}
                \STATE Compute $\underline{\vec{u}}^k_{i}$ \eqref{eq:controls} and $\tilde{\gamma}^k_{ij}$ \eqref{eq:tilde gamma}
            \ENDIF  \label{lst:line:checking leaders availability 2 }
        \ENDFOR
    \ENDWHILE
    \STATE Fix $\gamma^k_i = \min_{j\in \mathcal{L}_{\phi}(i) } \tilde{\gamma}_{ij}$ \label{lst:line:end}
\ENDIF
\end{algorithmic}
\end{algorithm}
\subsection{Margin function computation}\label{margin computation}
In \eqref{eq:new decentralized optimization with tasks} the margin functions $\zeta_{ij}^{\phi}(\vec{e}_{ij},t)=\lambda b^\phi(\vec{e}^k_{ij},t^k) + \partial_tb^{\phi}_{ij}(\vec{e}^k_{ij},t^k) + \nu_{ij}^{\phi}(\vec{e}_{ij}^k,t^k)$ for all $j\in \mathcal{N}_{\psi}(i)$ and $\zeta_{i}^{\phi}(\vec{x}_{i},t)=\lambda b_i^\phi(\vec{x}^k_{i},t^k) + \partial_tb^{\phi}_{i}(\vec{x}^k_{i},t^k) + \nu_{i}^{\phi}(\vec{x}_{i}^k,t^k)$ as per \eqref{eq: local zeta} for some (possibly different) $\lambda \in \mathbb{R}_+$ must be computed at each $t^k$. Particularly, $\nu_{i}^{\phi}(\vec{x}_{i}^k,t^k)$ and $\nu_{ij}^{\phi}(\vec{e}_{ij}^k,t^k)$ must satisfy $\nu_{i}^{\phi}(\vec{x}_{i}^k,t^k) \leq \bar{\nu}_{i}^{\phi}(\vec{x}_{i}^k,t^k) $ and $\nu_{ij}^{\phi}(\vec{e}_{ij}^k,t^k) \leq \bar{\nu}_{ij}^{\phi}(\vec{e}_{ij}^k,t^k)$ with  $\bar{\nu}_{ij}^{\phi}(\vec{e}_{ij}^k,t^k)$ and $\bar{\nu}_{i}^{\phi}(\vec{x}_{i}^k,t^k)$ being according to \eqref{eq:local nu}. To reduce conservativism, it is beneficial to compute values $\nu_{ij}^{\phi}(\vec{e}_{ij}^k,t^k)$, $\nu_{i}^{\phi}(\vec{x}_{i}^k,t^k)$ as close as possible to their optimal values $\bar{\nu}_{ij}^{\phi}(\vec{e}_{ij}^k,t^k)$, $\bar{\nu}_{i}^{\phi}(\vec{x}_{i}^k,t^k)$, recalling that this computation is accomplished online. Regarding $\nu_{i}^{\phi}(\vec{x}_{i}^k,t^k)$, it is possible to directly compute $\nu_{i}^{\phi}(\vec{x}_{i}^k,t^k) = \bar{\nu}_{i}^{\phi}(\vec{x}_{i}^k,t^k)$ by solving for \eqref{eq:local nu} online at each time $t^k$ since each agent $i$ is aware of $\vec{x}_i^k$ and can compute $\mathcal{R}_i(\vec{x}^k_i,\delta t)$, where $\mathcal{R}_i$ is the reachable set for agent $i$. On the other hand, we have that computing $\bar{\nu}_{ij}^{\phi}(\vec{e}_{ij}^k,t^k)$ online requires solving the following minimization program from \eqref{eq:local nu} :
$$
\begin{aligned}
&\bar{\nu}_{ij}^{\phi}(\vec{e}^k_{ij},t^k) = \min \bigl\{ L_{f_i} b_{ij}^\phi(\vec{e}^k_{ij},t^k) - L_{f_i} b_{ij}^\phi(\bar{\vec{e}}_{ij},\tau) +\\&(L_{g_i}b_{ij}^\phi(\vec{e}^k_{ij},t^k)- L_{g_i}b_{ij}^\phi(\bar{\vec{e}}_{ij},\tau))\vec{u}_i +
L_{f_j} b_{ij}^\phi(\vec{e}^k_{ij},t^k) - \\
&L_{f_j} b_{ij}^\phi(\bar{\vec{e}}_{ij},\tau) + 
(L_{g_j}b_{ij}^\phi(\vec{e}^k_{ij},t^k)- L_{g_j}b_{ij}^\phi(\bar{\vec{e}}_{ij},\tau))\vec{u}_j+\\
&\partial_t b_{ij}^\phi(\vec{e}^k_{ij},t^k) -  \partial_t b_{ij}^\phi(\bar{\vec{e}}_{ij},\tau)
 + \lambda b_{ij}^{\phi}(\vec{e}^k_{ij},t^k)-\lambda b_{ij}^\phi(\bar{\vec{e}}_{ij},\tau) \bigr\} \\
 &s.t : \; \bar{\vec{e}}_{ij}\in \mathcal{R}_{ij}(\vec{e}_{ij},\delta t),\tau \in [t^k,t^{k+1}), \vec{u}_{i}\in \mathbb{U}_i,\vec{u}_j\in \mathbb{U}_j;
\end{aligned}
$$
where $\mathcal{R}_{ij}(\vec{e}^k_{ij},\delta t) = \mathcal{R}_{i}(\vec{x}^k_{i},\delta t) \ominus \mathcal{R}_{j}(\vec{x}^k_{j},\delta t)$. Indeed, $\mathcal{R}_{ij}(\vec{e}^k_{ij},\delta t)$ represents the reachable set of the relative state between $i$ and $j$ within a sampling interval $\delta t$. Knowledge of both the dynamics of $i$ and $j$ as well as $\mathcal{R}_i(\vec{x}_i,\delta t)$, $\mathcal{R}_j(\vec{x}_j,\delta t)$ is required to compute $\bar{\nu}_{ij}^{\phi}(\vec{e}^k_{ij},t^k)$. Since we assume that agents do not share detailed knowledge of each other's dynamics, we only let $\mathcal{R}_i(\vec{x}_i^k,\delta t)$ and $\mathcal{R}_j(\vec{x}_j^k,\delta t)$ to be shared through communication among $i$ and $j$ at each sampling time $t^k$. Hence we define $\nu_{ij}^{\phi}(\vec{e}^k_{ij},t^k)$ as
\begin{equation}\label{eq: nu by sum}
\begin{aligned}
& \nu_{ij}^{\phi}(\vec{e}^k_{ij},t^k) ={}^{i}\Upsilon^\phi_{ij}(t^k)  +  {}^{j}\Upsilon^\phi_{ij}(t^k)
\end{aligned}
\end{equation}
where
\begin{subequations}\label{eq: upsilon}
\begin{align}
\begin{aligned}
&^{i}\Upsilon^{\phi}_{ij}(t^k) :=  \min_{\vec{u}_i, \bar{\vec{e}}_{ij},\tau} \bigl\{L_{f_i} b_{ij}^\phi(\vec{e}^k_{ij},t^k) -L_{f_i} b_{ij}^\phi(\bar{\vec{e}}_{ij},\tau) + \\
&(L_{g_i}b_{ij}^\phi(\vec{e}^k_{ij},t^k)-L_{g_i}b_{ij}^\phi(\bar{\vec{e}}_{ij},\tau))\vec{u}_i +\bigl(\lambda b_{ij}^{\phi}(\vec{e}^k_{ij},t^k)- \\
&\lambda b_{ij}^\phi(\bar{\vec{e}}_{ij},\tau) + \partial_t b_{ij}^\phi(\vec{e}^k_{ij},t^k) -  \partial_t b_{ij}^\phi(\bar{\vec{e}}_{ij},\tau) \bigr)/2\bigl\} \\
&s.t :\;   \bar{\vec{e}}_{ij}\in \mathcal{R}_{ij}(\vec{e}^k_{ij},\delta t), \, \tau\in [t^k, t^{k+1}),\, \vec{u}_i\in \mathbb{U}_i;
\end{aligned}\\
\begin{aligned}
&^{j}\Upsilon^\phi_{ij}(t^k) := \min_{\vec{u}_j, \bar{\vec{e}}_{ij},\tau} \bigl\{
L_{f_j} b_{ij}^\phi(\vec{e}^k_{ij},t^k) -L_{f_j} b_{ij}^\phi(\bar{\vec{e}}_{ij},\tau) +\\ 
&(L_{g_j}b_{ij}^\phi(\vec{e}^k_{ij},t^k)-L_{g_j}b_{ij}^\phi(\bar{\vec{e}}_{ij},\tau))\vec{u}_j +\bigl(\lambda b_{ij}^{\phi}(\vec{e}^k_{ij},t^k)-\\ 
&\lambda b_{ij}^\phi(\bar{\vec{e}}_{ij},\tau) + \partial_t b_{ij}^\phi(\vec{e}^k_{ij},t^k) -  \partial_t b_{ij}^\phi(\bar{\vec{e}}_{ij},\tau) \bigr)/2\bigl\}\\
&st: \; \bar{\vec{e}}_{ij}\in \mathcal{R}_{ij}(\vec{e}^k_{ij},\delta t),\,   \tau\in [t^k, t^{k+1}),\,  \vec{u}_j\in \mathbb{U}_j. 
\end{aligned}
\end{align}
\end{subequations}
and we note that $\nu_{ij}^{\phi}(\vec{e}^k_{ij},t^k) \leq \bar{\nu}_{ij}^{\phi}(\vec{e}^k_{ij},t^k)$\footnote{Summing the objectives in \eqref{eq: upsilon}, the original objective in \eqref{eq:local nu} is found. The inequality is then justified since the minimum of the sum is greater than the sum of the minimums}. Thus, for each task $\phi_{ij}$, agent $i$ computes $^{i}\Upsilon_{ij}^k$ while it receives the value of $^{j}\Upsilon_{ij}^k$ at each $t^k$ from all $j\in \mathcal{N}_{\psi}(i)$. Eventually $\nu_{ij}^{\phi}(\vec{e}^k_{ij},t^k)$ are computed by direct summation as per \eqref{eq: nu by sum}
\begin{remark}\label{remark on non convexity}
    Computing $\nu_{i}^{\phi}(\vec{x}^k_i,t^k)$ and $\nu_{ij}^{\phi}(\vec{e}^k_{ij},t^k)$ online, requires solving a nonlinear non-convex program. Hence, each agent solves for $^{j}\Upsilon^\phi_{ij}(t^k), \; \forall j\in \mathcal{N}_{\psi}(i)$ and $\nu_{i}^{\phi}(\vec{x}_i^k,t^k)$ by selecting a batch of initial conditions starting from which the parallel optimization is undertaken. The minimum solution is then taken as the actual solution.
\end{remark}
\subsection{Control reduction factor computation}\label{control reduction sub section}
At last the control reduction factors $\gamma_i^k,,\forall i\in \mathcal{V}$ and the worse impacts ${}^{j}\underbar{$\epsilon$}^{\phi}_{ij}(t^k)$ as per \eqref{eq: follower worse impact} must be computed to implement \eqref{eq:new decentralized optimization with tasks}. Initially, consider a single task $\phi_{ij}$ with corresponding sdCBF $b_{ij}^\phi(\vec{e}_{ij},t)$ and margin function $\zeta_{ij}^{\phi}(\vec{e}_{ij},t)$ as per \eqref{eq: local zeta}. Furthermore, let $i$ be the leader of the edge $(i,j)\in \mathcal{E}_{\psi}$ and assume at time $t^k$ that the control reduction factor $\gamma^k_i$ for the edge leader $i$ is fixed.  We next derive an analytical approach to compute $\gamma_j^k$ for the follower agent $j$ such that no matter how the follower $j$ chooses its input $\vec{u}_j \in \gamma_{j}^k \mathbb{U}_j$ at time $t^k$, there exist always $\vec{u}_i^k \in \gamma_i^k \mathbb{U}_i$ from the leader such that \eqref{eq: leader constraint} is satisfied. Let \begin{equation}
\begin{aligned}
V_{ij}^{\phi}(t^k) = \max_{\vec{u}_i\in \gamma^k_i\mathbb{U}_i} \min_{\vec{u}_j\in \gamma^k_j\mathbb{U}_j}
 L_{f_i} b_{ij}^\phi(t^k) + L_{g_i}b_{ij}^\phi(t^k)\vec{u}_i +\\ 
L_{f_j} b_{ij}^\phi(t^k) + L_{g_j}b_{ij}^\phi(t^k)\vec{u}_j +  \zeta_{ij}^\phi(t^k),
\end{aligned}
\end{equation}
where $V^{\phi}_{ij}(t^k)$ intuitively represents the value function of a two-player game over the sdCBF condition \eqref{eq: sdcbf condition} for task $\phi_{ij}$ where $i$ is the maximizing player (the leader) and $j$ is the minimizing player (the follower). If $\gamma_j^k$ is designed such that $V_{ij}^{\phi}(t^k)\geq 0$ at each time $t^k$, then there exists $\vec{u}_i\in \gamma_i^k\mathbb{U}_i$ from the leader such that constraint \eqref{eq: leader constraint} is satisfied leading to the satisfaction of $\phi_{ij}$. The following proposition is needed for the next derivations 
\begin{proposition}\label{eq: homogeneity}
Consider the minimization $\min_{x\in \mathcal{X}} f(x)$ 
where $\mathcal{X}$ is convex and $f(\alpha x)=\alpha f(x)$ with $\alpha >0$. If $x^*\in \text{argmax}_{x\in \mathcal{X}}(f(x))$ , then $\alpha x^*\in  \text{argmax}_{x_\alpha \in \alpha \mathcal{X}}(f(x))$.  
\end{proposition}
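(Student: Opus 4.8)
The plan is to exploit the positive homogeneity $f(\alpha x) = \alpha f(x)$ together with the definition $\alpha\mathcal{X} := \{\alpha x \mid x \in \mathcal{X}\}$ from the notation section, and to track how the objective value transforms under the scaling map $x \mapsto \alpha x$. Since $\alpha > 0$, this map is a bijection between $\mathcal{X}$ and $\alpha\mathcal{X}$, so every feasible point $y$ of the scaled problem can be written uniquely as $y = \alpha x$ for some $x \in \mathcal{X}$. The whole argument is then a single chain of equalities connecting $f$ evaluated at $y$ and at $\alpha x^*$, plus one sign-preserving inequality coming from the optimality of $x^*$.

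First I would fix an arbitrary $y \in \alpha\mathcal{X}$, write $y = \alpha x$ with $x \in \mathcal{X}$, and apply homogeneity to obtain $f(y) = f(\alpha x) = \alpha f(x)$. Next, invoking the optimality of $x^*$, i.e. $f(x) \leq f(x^*)$ for all $x \in \mathcal{X}$, and multiplying through by the positive scalar $\alpha$ (which preserves the inequality) gives $\alpha f(x) \leq \alpha f(x^*)$. Applying homogeneity once more to the right-hand side yields $\alpha f(x^*) = f(\alpha x^*)$, so that $f(y) \leq f(\alpha x^*)$ for every $y \in \alpha\mathcal{X}$. Because $\alpha x^* \in \alpha\mathcal{X}$ by construction, this establishes that $\alpha x^*$ attains the maximum over the scaled set, i.e. $\alpha x^* \in \text{argmax}_{y \in \alpha\mathcal{X}} f(y)$, which is the claim.

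There is no substantive obstacle here: the proof is elementary and the only point requiring care is the direction of the inequality, which survives the multiplication precisely because $\alpha > 0$ (the statement would fail for $\alpha < 0$, since maximization and minimization would swap). I would also note in passing that the convexity of $\mathcal{X}$ is not actually invoked by this scaling identity --- the bijection and homogeneity alone suffice --- so that hypothesis is retained presumably for the downstream use of this proposition in the computation of the control-reduction factor $\gamma_j^k$ through the value function $V_{ij}^{\phi}(t^k)$, where the sets $\gamma_i^k\mathbb{U}_i$ and $\gamma_j^k\mathbb{U}_j$ are convex. The identical chain of steps, with every inequality reversed, simultaneously proves the corresponding statement for $\text{argmin}$, matching the minimization framing in the first line of the proposition.
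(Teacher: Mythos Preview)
Your proof is correct and follows essentially the same route as the paper's own argument: both use the optimality inequality $f(x^*)\geq f(x)$, multiply through by $\alpha>0$, and apply the homogeneity $f(\alpha x)=\alpha f(x)$ on each side to conclude $f(\alpha x^*)\geq f(x_\alpha)$ for all $x_\alpha\in\alpha\mathcal{X}$. Your additional remarks that convexity of $\mathcal{X}$ is not actually used and that the same chain yields the $\text{argmin}$ version are accurate observations not made explicit in the paper.
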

\begin{proof}
    By definition $x^*\in \text{argmax}_{x\in \mathcal{X}}(f(x))\Leftrightarrow f(x^*)\geq f(x)\; \forall x\in \mathcal{X}$. Exploiting the properties of $f(x)$ we have $f(\alpha x^*)=\alpha f(x^*)\geq \alpha f(x)=f(\alpha x)$ for all $x\in \mathcal{X}$. Since $\alpha \mathcal{X}:=\{\alpha x| \;\forall x\in \mathcal{X}\}$ we conclude  $f(\alpha x^*)\geq f(x_\alpha)\; \forall x_\alpha \in \alpha \mathcal{X} \Rightarrow \alpha x^*\in  \text{argmax}_{x_\alpha \in \alpha \mathcal{X}}(f(x))$.
\end{proof}\par
Since $\vec{u}_i$ and $\vec{u}_j$ enter linearly in $V^{\phi}_{ij}(t^k)$, we have from Prop. \ref{eq: homogeneity} and \eqref{eq: leader best impact}-\eqref{eq: follower worse impact} that 
\begin{equation}\label{eq:value function}
\begin{aligned}
V_{ij}^{\phi}(t^k)= L_{f_i} b_{ij}^\phi(t^k) + L_{g_i}b_{ij}^\phi(t^k)\bar{\vec{u}}^k_i\gamma^k_i + L_{f_j} b_{ij}^\phi(t^k) +  \\
L_{g_j}b_{ij}^\phi(t^k)\underline{\vec{u}}^k_j\gamma^k_j + \zeta_{ij}^\phi(t^k) = {}^{i}\bar{\epsilon}_{ij}^{\phi}(t^k) +  {}^{j}\underbar{$\epsilon$}_{ij}^{\phi}(t^k) + \zeta_{ij}^\phi(t^k)
\end{aligned}
\end{equation}
where
\begin{equation}\label{eq:controls}
\hspace{-0.2cm}\underline{\vec{u}}^k_{j} = \underset{\vec{u}_j\in \mathbb{U}_j}{\text{argmin}} L_{g_j}b^{\phi}_{ij}(t^k)\vec{u}_j\,;
\bar{\vec{u}}^k_{i} = \underset{\vec{u}_i\in \mathbb{U}_i}{\text{argmax}}L_{g_i}b^{\phi}_{ij}(t^k)\vec{u}_i.
\end{equation}
Hence, after agent $i$ and $j$ independently compute $\bar{\vec{u}}^k_{i}$, $\underline{\vec{u}}^k_{j}$ by solving \eqref{eq:controls}, an analytical expression of $V_{ij}^{\phi}(t^k)$ as a function of $\gamma_i^k$ and $\gamma_j^k$ is derived from \eqref{eq:value function}. The next assumption is considered.
\begin{assumption}\label{gamma positivity}
Consider task $\phi_{ij}$ such that $i$ is the leader of $(i,j)\in \mathcal{E}_{\psi}$. Then it holds ${}^{i}\bar{\epsilon}^{\phi}_{ij}(t^k)  + \zeta_{ij}^{\phi}(t^k) + L_{f_i}b_{ij}^{\phi}(t^k) \geq 0$ for all $t^k$ for system \eqref{eq:multi agent dynamics} under the control law \eqref{eq:new decentralized optimization with tasks}, where ${}^{i}\bar{\epsilon}_{ij}^\phi(t^k)$ is best impact $i$ as per \eqref{eq: leader best impact}.
\end{assumption}\par
Assumption \ref{gamma positivity} is introduced to ensure that the leader agent $i$ is always able to enforce $V_{ij}^{\phi}(t^k) \geq 0$ (and thus satisfy the sdCBF condition \eqref{eq: sdcbf condition}) if the follower agent $j$ chooses $\gamma_j^k=0$ at time $t^k$  (and thus $\vec{u}_{j}^k=\vec{0}$). At this point, once the follower agent $j$ computes $\zeta_{ij}^{\phi}(t^k)$ (Sec \ref{margin computation}) and the best impact ${}^{i}\bar{\epsilon}_{ij}^\phi(t^k)$ is sent from the leader $i$ to $j$ as per \eqref{eq: leader best impact}, then imposing $V_{ij}^{\phi}(t^k) \geq 0$ implies $\gamma^k_j \leq \frac{-({}^{i}\bar{\epsilon}^{\phi}_{ij}(t^k)  + \zeta_{ij}^{\phi}(t^k) + L_{f_i}b_{ij}^{\phi}(t^k)) }{L_{g_j}b_{ij}^{\phi}(t^k)\underline{\vec{u}}^k_{j}}$ if $L_{g_j}b_{ij}^{\phi}(t^k)\underline{\vec{u}}^k_{j}< 0$ or 
$\gamma^k_j \geq \frac{-({}^{i}\bar{\epsilon}^{\phi}_{ij}(t^k)  + \zeta_{ij}^{\phi}(t^k) + L_{f_i}b_{ij}^{\phi}(t^k)) }{ L_{g_j}b_{ij}^{\phi}(t^k)\underline{\vec{u}}^k_{j}}$ if $L_{g_j}b_{ij}^{\phi}(t^k)\underline{\vec{u}}^k_{j}\geq0$.
By Assmp. \ref{gamma positivity} the numerator of the previous expressions is non-positive so any $\gamma^k_j \in (0,1]$ would satisfy the latter conditioning and $\gamma_j^k=1$ is set arbitrarily in this case. On the other hand, the former condition is the only binding one. Since $j$ is possibly follower for a number of tasks equal to $|\mathcal{L}_{\psi}(i)|$ , we have that $\gamma_j^k$ is computed as $\gamma^k_{j} = \min_{i\in \mathcal{L}_{\psi}(j)}\{ \tilde{\gamma}^k_{ij} \}$ where
\begin{equation}\label{eq:tilde gamma}
    \tilde{\gamma}^k_{ij} = \begin{cases}
    1 \;\quad  \text{if}\; \quad  L_{g_j}b_{ij}^{\phi}(t^k)\underline{\vec{u}}^k_j \geq 0 \\
    \min\{1, \frac{-({}^{i}\bar{\epsilon}^{\phi}_{ij}(t^k)  + \zeta_{ij}^{\phi}(t^k) + L_{f_i}b_{ij}^{\phi}(t^k)) }{L_{g_j}b_{ij}^{\phi}(t^k)\underline{\vec{u}}^k_{j}}\Bigl\}\;  \text{else},
    \end{cases}
\end{equation}
$\forall i\in \mathcal{L}_{\psi}(j)$.  For the special case $L_{g_j}b_{ij}^{\phi}(t^k) = \vec{0}$, it is known from Assmp. \ref{zero gradient assumption} and Remark \ref{gradient remark} that  $L_{g_j}b_{ij}^{\phi}(t^k) = \vec{0} \Leftrightarrow L_{g_i}b_{ij}^{\phi}(t^k) = \vec{0} \Rightarrow \partial_{\vec{x}}b_{ij}^\phi(t^k) = \vec{0}$ and in this case $V_{ij}^{\phi} = \zeta_{ij}^{\phi}(t^k) \geq \chi$ for some $\chi > 0$. Hence $\tilde{\gamma}^k_{ij}$ can be chosen arbitrarily and we take the least conservative solution $\tilde{\gamma}^k_{ij}=1$ in \eqref{eq:tilde gamma}. After $\gamma^k_j$ is computed, the worse impact $^{j}\underbar{$\epsilon$}^\phi_{ij}(t^k)$ in \eqref{eq: follower worse impact} is also computed as $^{j}\underbar{$\epsilon$}^\phi_{ij}(t^k):= L_{f_j} b_{ij}^\phi(t^k) + L_{g_i}b_{ij}^\phi(t^k)\underline{\vec{u}}^k_j\gamma^k_j$ and sent back to each leader $i \in \mathcal{L}_{\psi}(j)$ so that constraint \eqref{eq: leader constraint} can be computed for each leader.\par
Given that the leaders $i\in \mathcal{L}_{\psi}(j)$ have fixed their values of $\gamma_i^k$, then we have shown how $\gamma_j^k$ such that a solution to \eqref{eq: leader constraint} exists for each leader. Next, the decentralized algorithm Alg. \ref{alg:control reduction alg} provides a way for all agents $i\in \mathcal{V}$ to fix their value of $\gamma_i^k$ sequentially as each $t^k$.  Namely, Alg. \ref{alg:control reduction alg} start with each agent first computing ${}^{i}\Upsilon^\phi_{ij}(t_k)$ as per \eqref{eq: upsilon} and $\nu_i^\phi(t^k)$ as per Sec. \ref{margin computation} (line \ref{lst:line:entry point}). Then, if agent $i$ is a leaf node in $\mathcal{F}_{\psi}$, then $\gamma_i^k$ is set to $\gamma_i^k=1$ and the best impact $^{i}\bar{\epsilon}^\phi_{ij}(t^k)$ is computed for the only task $\phi_{ij}$ as per \eqref{eq: leader best impact}. Selecting $\gamma_i^k=1$ for the leaf nodes is justified since leaf nodes do not have any leader. On the other hand, if $i\not\in \mathcal{F}_{\psi}$, then $i$ continuously checks its leader neighbours in $j\in \mathcal{L}_{\phi}(i)$. When any leader $j\in \mathcal{L}_{\phi}(i)$ has fixed its $\gamma_j^k$ then $i$ requires the best impact $^{j}\bar{\epsilon}_{ij}^{\phi}(t^k)$ as per \eqref{eq: leader best impact} and ${}^{j}\Upsilon^{\phi}_{ij}(t^k)$ as per \eqref{eq: upsilon}. With this information, $\nu_{ij}^{\phi}(t_k) = {}^{i}\Upsilon^{\phi}_{ij}(t^k) + {}^{j}\Upsilon^{\phi}_{ij}(t^k)$ (and thus $\zeta_{ij}^{\phi}$) is computed together with $\tilde{\gamma}^k_{ij}$ as per \eqref{eq:tilde gamma} (lines \ref{lst:line:checking leaders availability}-\ref{lst:line:checking leaders availability 2 }). Note that lines \ref{lst:line:checking leaders availability}-\ref{lst:line:checking leaders availability 2 } in Alg \ref{alg:control reduction alg} can be parallelised if multiple leaders in $\mathcal{L}_{\psi}(i)$ fix their $\gamma_j^k$ simultaneously. When $\tilde{\gamma}^k_{ij}$ has been computed for all the leaders $j\in \mathcal{L}_{\phi}(i)$, the minimum of all $\tilde{\gamma}_{ij}^k$ is taken as $\gamma^k_j$ (line \ref{lst:line:end}). The whole sequential computation of $\gamma_i^k$ requires again  $\lceil \frac{\rho}{2} \rceil$ rounds as  Alg. \ref{alg:leadership propagation}. 

\section{Simulations}\label{simulations}
Consider a MAS composed of a single-integrator system $\dot{\vec{p}}_1= \vec{u}_1 \in \mathbb{R}^2$ and 6 differential drive systems 
$$\begin{bmatrix}\dot{\vec{p}}_i \\ \dot{\theta}_i \end{bmatrix}  = \begin{bmatrix}cos(\theta_i)& l\cdot sin(\theta_i)\\-sin(\theta_i)& l\cdot cos(\theta_i)\\ 0  &1\end{bmatrix} \begin{bmatrix}v_i \\ \omega_i \end{bmatrix} \; \forall i\in 2,\ldots 7 $$
where  $v_i$ and $\omega_i$ are linear and angular velocities respectively and $l>0$ is a small \textit{look-ahead} distance.  We select a sampling interval $\delta t =0.1s$ and communication radius $r_c=8.5$. The control constraints are $\|\vec{u}_1\| \leq 0.9$ and $|v_i|\leq 0.6\land |\omega_i|<0.15,\, \forall i=2,\ldots 7$. We then define the acyclic set of edges $\mathcal{E}_{\psi} = \{(1,2),(1,3),(2,4),(2,5),(3,6),(3,7),(1,1)\}$ with collaborative formation tasks $\varphi_{12}= F_{[30,40]} \|\vec{p}_{12}\| \leq 5, \varphi_{37}= F_{[20,30]} \|\vec{p}_{37} - [2.5,-2.5]^T\|\leq 2, \varphi_{36}= G_{[20,30]} \|\vec{p}_{37} - [-2.5,2.5]^T\|\leq 2$ and communication maintenance task $\varphi_{ij}^{comm},\; \forall (i,j)\in \mathcal{E}_{\psi}$ as per Corollary \ref{communicationc corollary}. Moreover, let the independent task $\varphi_1 = G_{[20,50]}\|\vec{p}_{1} - [15,0]^T\|\leq 3$. A collision avoidance mechanism to avoid 3 static obstacles and one dynamic obstacle (blue circle in Fig. \ref{fig:mas_ca}) is considered in parallel to the control law \eqref{eq:new decentralized optimization with tasks}. The simulation is run for 50 \textit{s} on an Intel-Core i7-1265U with 32 GB of RAM, and the result is shown in Fig. \ref{fig:mas_ca}. The time evolution of the barrier functions $\vec{b}^{\phi}_{ij}(\vec{e}_{ij},t)$ (Fig. \ref{fig:barriers}) shows that all the collaborative tasks are satisfied ($\vec{b}^{\phi}_{ij}(\vec{e}_{ij},t) \geq 0\; \forall (i,j)\in \mathcal{E}_{\psi}$) while the independent task for Agent $1$ is unsatisfied with a delay of 20s with respect to the required time of satisfaction due to the prioritization of the collaborative tasks. The value of the functions $\nu_{ij}^{\phi},\nu_{i}^{\phi}$ is shown in Fig. \ref{fig: nu}, while Fig. \ref{fig:prop_time} shows the computational time required to propagate control reduction factor $\gamma_i^k$ from the leaf nodes $\{5,6,7,4\}$ to the root agent $1$ as per Alg. \ref{alg:control reduction alg}. This time includes the online computation of $\nu^\phi_{ij}$,$\nu^\phi_{i}$ as well as the computation of best/worst impacts $\bar{\epsilon}^{\phi}_{ij}$,$\underbar{\text{$\epsilon$}}^{\phi}_{ij}$ at each $t^k$. Assuming zero communication time, the average propagation time is $6.9$ msec with a standard deviation of 1 msec. In Fig. \ref{fig:gamma} the value of $\gamma_i^k$ is shown at each $t^k$ where for most of the agents, the value remains at $\gamma_i^k=1$ while agents $1$ and $3$ have a minimum control reduction factor of $0.67$ and $0.51$ respectively at times $15.5s$ and $0s$.

\begin{figure*}[ht]
\centering
\begin{subfigure}{.3\textwidth}
  \centering
  \includegraphics[width=1.\columnwidth]{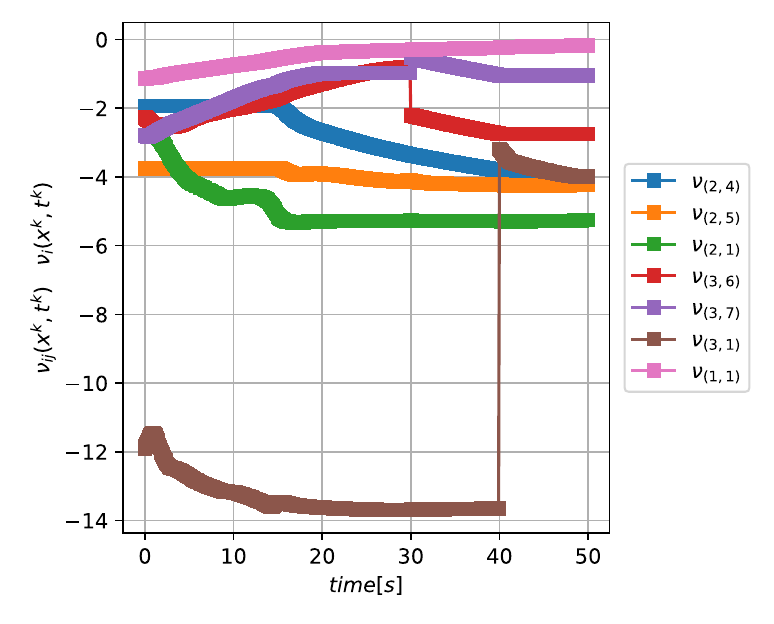}
  \caption{}
  \label{fig: nu}
\end{subfigure}%
\begin{subfigure}{.3\textwidth}
  \centering
  \includegraphics[width=1.\columnwidth]{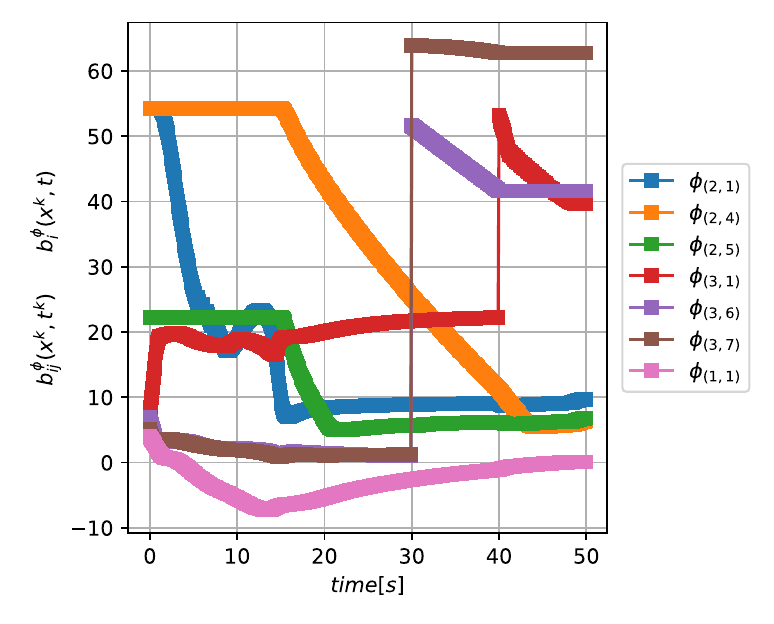}
  \caption{}
  \label{fig:barriers}
\end{subfigure}%
\begin{subfigure}{.3\textwidth}
  \centering
  \includegraphics[width=1.\columnwidth]{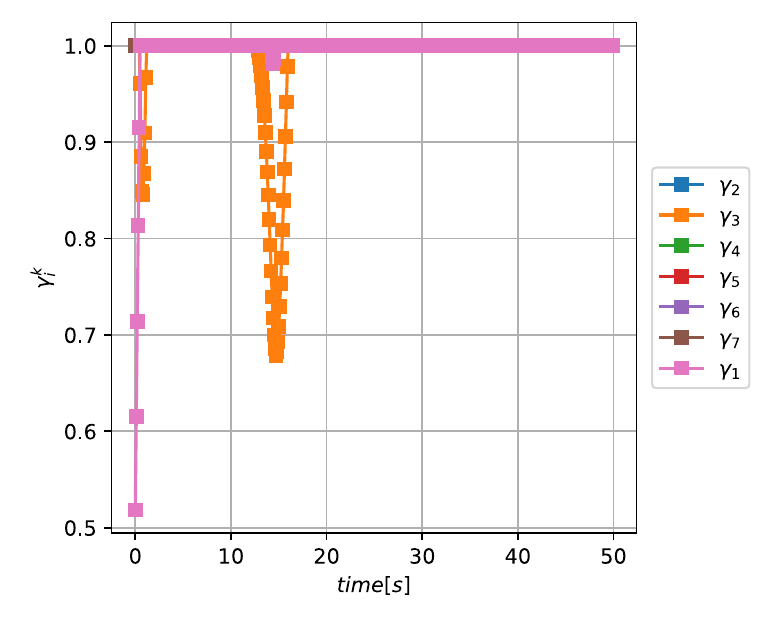}
  \caption{}
  \label{fig:gamma}
\end{subfigure}
\vspace{-0.8cm}\\
\begin{subfigure}{.3\textwidth}
  \centering
\includegraphics[width=.8\columnwidth]{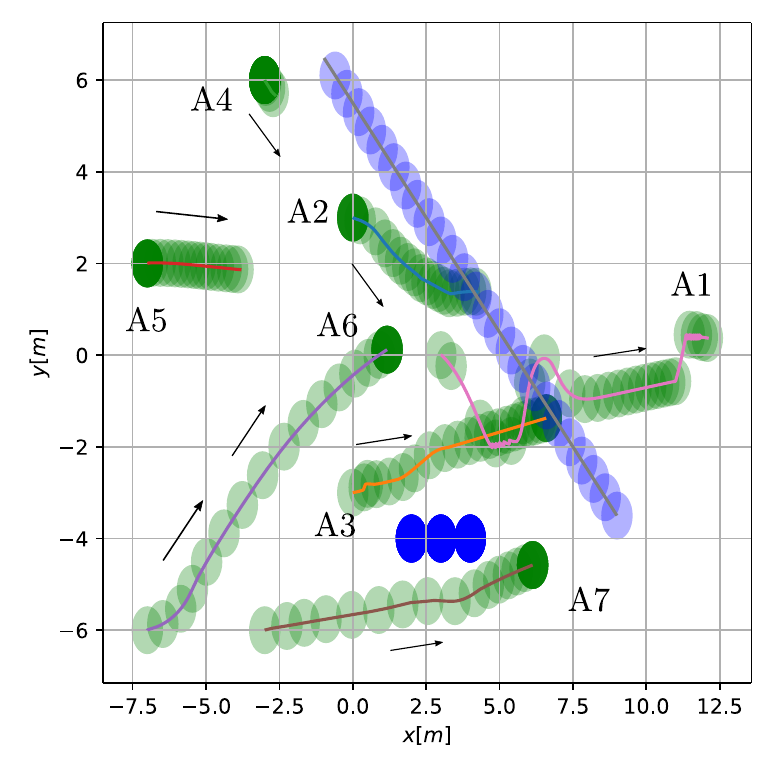}
  \caption{}
  \label{fig:mas_ca}
\end{subfigure}%
\begin{subfigure}{.3\textwidth}
  \centering
  \includegraphics[width=0.94\columnwidth,trim={0cm 2cm 0 0},clip]{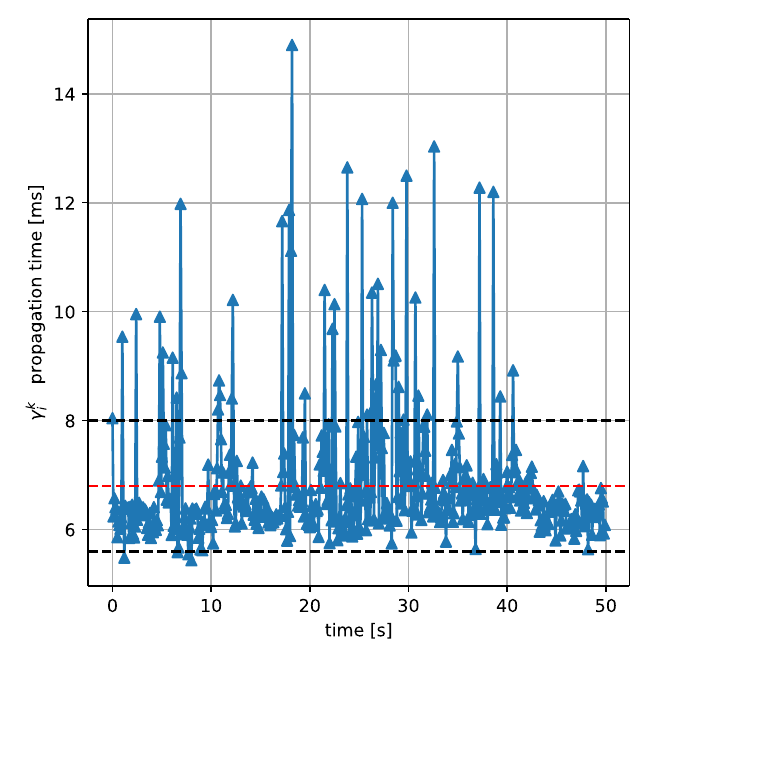}
  \caption{}
  \label{fig:prop_time}
\end{subfigure}%
\caption{ (a) Margin functions $\vec{\nu}^{\phi}_{ij}(\vec{e}_{ij}^k,t^k)$,$\vec{\nu}^{\phi}_{i}(\vec{x}_{i}^k,t^k)$ for each barrier $b_{ij}^{\phi}(\vec{e}_{ij},t), b_{i}^{\phi}(\vec{x}_{i},t)$. (b) Barrier functions  $b_{ij}^{\phi}(\vec{e}^k_{ij},t^k),b_{i}^{\phi}(\vec{x}^k_{i},t^k)$ associated with tasks $\phi_{ij},\phi_i$. (c)  Control reduction factor $\gamma_i^k$. (d) MAS simulation. The arrows represent directions of motion. Blue and green circles represent obstacles and agents respectively (e) Time required to compute $\gamma_i^k$ sequentially for each $i\in \mathcal{V}$ at each $k$ (Alg. \ref{alg:control reduction alg}).}
\label{fig:mas sym big}
\end{figure*}

\section{Conclusions}\label{conclusions}
We showed a novel decentralised control approach for multi-agent systems subject to spatio-temporal and communication constraints exploring the notion of acyclic task graph. Our framework is amenable to direct implementation into sampled-data real embedded controllers with continuous time guarantees over task satisfaction.  In future work, we will explore the robustness and scalability of the proposed approach.


\bibliographystyle{ieeetr} 
\bibliography{references} 

\end{document}